%% file: fullversion.tex
\documentclass[english, a4paper]{article}

\RequirePackage[a4paper, heightrounded, text={14cm,22.2cm}, ignoreall]{geometry}

\usepackage{amsmath}
\usepackage{amsthm}
\usepackage{amssymb}
\usepackage{authblk}
\usepackage{hyperref}
\usepackage{microtype}
\usepackage{url} 
\usepackage{subcaption}
\usepackage[indent, skip=0.25\baselineskip plus 2pt]{parskip}
\usepackage{booktabs}
\usepackage{float}

\newtheoremstyle{mystyle}
  {10pt} 
  {10pt} 
  {\itshape} 
  {} 
  {\bfseries} 
  {.} 
  { } 
  {} 

\theoremstyle{mystyle}

\newtheorem{corollary}{Corollary}
\newtheorem{lemma}{Lemma}
\newtheorem{theorem}{Theorem}

\DeclareRobustCommand{\orcidID}[1]{{\hypersetup{hidelinks}\href{https://orcid.org/#1}{\includegraphics[scale=.03]{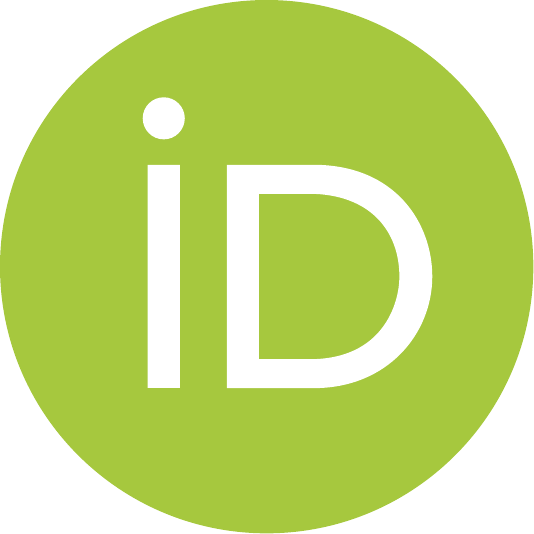}}}%
}

\usepackage{todonotes}

\title{Signotopes Induce Unique Sink Orientations on Grids\footnote{
	S.~Roch was funded by the DFG-Research Training Group 'Facets of Complexity' (DFG-GRK~2434). Special thanks to Rimma Hämäläinen for the stimulating discussions on this topic.
}}

\author{Sandro M. Roch\;\orcidID{0000-0002-9353-9413}}
\affil{Technische Universität Berlin\\
  \texttt{sandro.m.roch@posteo.de}}

\begin{document}

\maketitle

\begin{abstract}
    A unique sink orientation (USO) is an orientation of the edges of a polytope in which every face contains a unique sink. For a product of simplices~$\Delta_{m-1} \times \Delta_{n-1}$, Felsner, Gärtner and Tschirschnitz~(2005) characterize USOs which are induced by linear functions as the USOs on a~$(m \times n)$-grid that correspond to a~\textit{two-colored arrangement of lines}. We generalize some of their results to products~$\Delta^1 \times\cdots\times \Delta^r$ of~$r$ simplices,~USOs on $r$-dimensional grids and~\textit{$(r+1)$-signotopes}.
\end{abstract}

\section{Introduction}
\label{sec:introduction}

In the typical setting of linear programming, one has given a polytope~$P\subset \mathbb{R}^n$ and a linear objective function~\mbox{$f: \mathbb{R}^n\to \mathbb{R}$}. Assuming that~$f$ is sufficiently generic,~$f$ induces an orientation~$v\to w$ of any edge~$\{v, w\}$ of~$P$ s.t.~$f(v) < f(w)$. This is an acyclic orientation of the skeleton of~$P$ with the property that on any face~$F\subseteq P$ it has a unique source and a unique sink. A \textit{source} resp. \textit{sink} on~$F$ is a vertex on which all edges of~$F$ are oriented outgoing resp. incoming. The unique source and the unique sink of~$f$ correspond to the minimum and maximum of~$f$ on~$F$.

A (not necessarily linear) function~$f$ on the vertices of~$P$ that on each face induces a unique source and a unique sink is called an \textit{abstract objective function} by Kalai~\cite{kalai97}. If~$f$ is a linear function, Holt and Klee~\cite{holtKlee99} showed an even stronger property: On any face~$F$, there exist~$\operatorname{dim} F$ many internally disjoint dipaths from the source to the sink. This property is called the \textit{Holt-Klee condition}. 

In~\cite{felsnerGaertnerTschirschnitz05}, Felsner,~Gärtner and~Tschirschnitz treat the special case in which~$P$ is a simple $d$-polytope with~$d+2$ facets. Recall that two polytopes are~\textit{combinatorially equivalent} if their face lattices are isomorphic. A simple~$d$-polytope with~$d+2$ facets is combinatorially equivalent to the Cartesian product~\mbox{$\Delta_{m-1}\times\Delta_{n-1}$} of two simplices, with $m + n = d + 2$; see the example in Figure~\ref{fig:simplex_products}. Every face of~$\Delta_{m-1}\times\Delta_{n-1}$ is the Cartesian product of a face of~$\Delta_{m-1}$ and a face of~$\Delta_{n-1}$. Hence, if~$V$ and~$W$ denote the sets of vertices of~$\Delta_{m-1}$ and $\Delta_{n-1}$, then the~$m\cdot n$ vertices of~$\Delta_{m-1}\times\Delta_{n-1}$ can be identified with~$V\times W$, and for every~$V'\subseteq V$,~$W'\subseteq W$, the vertices~$V'\times W'$ form a face.

\begin{figure}[tb]
    \centering
    \includegraphics{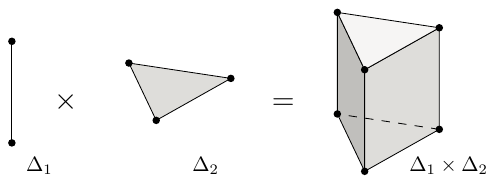}
    \caption{As a Cartesian product of simplices, the $3$-polytope~$\Delta_1\times\Delta_2$ has 5 facets.}
    \label{fig:simplex_products}
\end{figure}

It is because of this simple fact that an orientation of a simple $d$-polytope with~$d+2$ facets can be easily described as an orientation of a two-dimensional grid. For later purposes, we directly define a \textit{grid}~$G$ \textit{of size~$(n_1, \cdots, n_r)$} as the graph on the~\mbox{$r$-tuples}~$[n_1]\times\cdots\times [n_r]$,~$n_i\geq 1$, where two of them form an edge if they differ in exactly one coordinate. Note that in the literature, this is often called a~\textit{rook's graph} instead of a grid. The~\textit{dimension}~$\operatorname{dim} G$ of a grid is the number of~$i$ for which~$n_i > 1$.

For subsets~$V_i\subseteq [n_i]$,~$V_i\neq\emptyset$, the graph induced on~$V_1\times\cdots\times V_r$ is called a \textit{subgrid}. Any face~$F\neq\emptyset$ of~$\Delta_{m-1}\times\Delta_{n-1}$ corresponds uniquely to a subgrid~\mbox{$S=S_1\times S_2\subseteq[m]\times [n]$}. One should not confuse its dimension~$\operatorname{dim} S$ with the polytope dimension~$\operatorname{dim} F$, which is equal to~$\lvert S_1\rvert + \lvert S_2\rvert - 2$. A~\textit{grid orientation} is an orientation of the edges of a grid. A grid orientation naturally induces a grid orientation on any subgrid. We say a grid orientation~$\mathcal{O}$ contains a grid orientation~$\mathcal{O}'$ if there is a subgrid~$S$ on which the orientation induced by~$\mathcal{O}$ is~\textit{isomorphic} to~$\mathcal{O}'$.

Let us make precise what we consider as isomorphic orientations: An edge~$\{v, w\}$ in a grid is~\textit{of dimension~$i$} if~$v$ and~$w$ differ in the $i$-th coordinate. Two grid orientations~$\mathcal{O}$ and~$\mathcal{O}'$ are \textit{isomorphic} if there is a graph isomorphism~\mbox{$\varphi: \mathcal{O}\to \mathcal{O}'$} that preserves orientation, and in which a pair of edges~$\{v, w\}$ and~$\{v', w'\}$ is of the same dimension if and only if~$\{\varphi(v), \varphi(w)\}$ and~$\{\varphi(v'), \varphi(w')\}$ are of the same dimension. Such an isomorphism is a permutation of the~$r$ coordinates combined with a permutation of each of the sets~$[n_i]$.

Given a grid orientation~$\mathcal{O}$ and a subgrid~$S\subseteq\mathcal{O}$, a~\textit{sink} of~$S$ is a vertex~$v\in S$ whose incident edges in~$S$ are all incoming. In the case~$S=\mathcal{O}$, we call~$v$ a~\textit{(global) sink} of~$\mathcal{O}$. A \textit{unique sink (grid) orientation (USO)} is a grid orientation in which each subgrid contains a unique sink. This is equivalent to saying that any subgrid has a unique \textit{source}, where a source is defined analogously to a sink~\cite[Theorem 1]{gaertnerMorrisRuest08}. Although abstract objective functions were previously studied, the term unique sink orientation was coined by Szabo and Welzl~\cite{szaboWelzl01} in~2001. First introduced specifically for hypercubes, they were later generalized for grids~\cite{gaertnerMorrisRuest08}.

A USO is called~\textit{admissible}, if in any subgrid of size~$(n_1', \cdots, n_r')$ there exist~$n_1' + \cdots + n_r' - r$ many internally disjoint dipaths from the unique source to the unique sink\footnote{In difference to~\cite{felsnerGaertnerTschirschnitz05}, our definition of admissibility does not require acyclicity. However, two-dimensional USOs are always acyclic.}. This is exactly the Holt-Klee condition for faces. One can show that a USO of a two-dimensional grid is admissible if and only if it does not contain the~\textit{double-twist}~$\text{DT}$ in Figure~\ref{fig:pla2uso:double_twist}~\cite{felsnerGaertnerTschirschnitz05}. Indeed, the double twist violates the Holt-Klee condition, as there are only two instead of three internally disjoint paths from the source to the sink. Figure~\ref{fig:pla2uso:uso}(a) shows an example of~an~admissible~USO.

\begin{figure}[b]
    \centering
    \includegraphics[page=3]{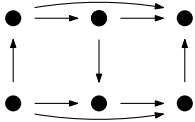}
    \caption{Double twist~$\text{DT}$; acyclic but non-admissible USO}
    \label{fig:pla2uso:double_twist}
\end{figure}

Furthermore, a grid orientation is called~\textit{realizable} if it is the orientation induced by a linear function on a polytope~$P$ which is combinatorially equivalent to a product of simplices. The reason why we only ask for a combinatorially equivalent polytope is this: A precise Cartesian product of simplices contains many classes of parallel edges. For instance, in~$\Delta_1\times\Delta_2$ (see Figure~\ref{fig:simplex_products}), there are four classes of parallel edges. In a linear orientation, parallel edges must have the same orientation. For our analysis, this is too strict. We allow the faces to be perturbed as long as the face lattice is still isomorphic to a product of simplices.

Every realizable orientation is admissible; and every admissible orientation is, by definition, a unique sink orientation. However, for both implications, the reverse is not true. It was one of the original motivations for the study of USOs to characterize realizable orientations. Admissibility is necessary but not sufficient.

For any vertex~$x=(x_1, \cdots, x_r)$ in a grid orientation, let~$\operatorname{rf}_i(x)$ be the out-degree in dimension~$i$, i.e.~the number of neighbors~$y=(y_1, \cdots, y_r)$ with orientation~$x \rightarrow y$ that have~$x_i \neq y_i$ and~$x_j = y_j$ for all~$j\neq i$. The~\textit{refined index} of~$x$ is defined as~\[\operatorname{rf}(x) := (\operatorname{rf}_1(x), \cdots, \operatorname{rf}_r(x)).\] It is easy to see that a USO is uniquely determined by its refined index. As an example, Figure~\ref{fig:pla2uso:uso}(b) shows the refined index of the orientation on the left. Observe that all tuples~\mbox{$[4]_0\times [4]_0$} (where $[n]_0 := \{0, \cdots, n\}$) appear as a refined index exactly once. This is not a coincidence:

\begin{lemma}[Theorem 2 in \cite{gaertnerMorrisRuest08}]\label{lemma:usoIffRifBij}
    The refined index of a USO is a bijection \[\operatorname{rf}: [n_1]\times\cdots\times [n_r]\to [n_1 - 1]_0 \times\cdots\times [n_r - 1]_0\;.\]
\end{lemma}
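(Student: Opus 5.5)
Since the domain and codomain both have $n_1\cdots n_r$ elements, it suffices to show that $\operatorname{rf}$ is injective. The range claim is immediate: a vertex $x$ has exactly $n_i-1$ neighbours in dimension $i$, so $\operatorname{rf}_i(x)\in[n_i-1]_0$. For injectivity, take two distinct vertices $x\neq y$ and suppose for contradiction that $\operatorname{rf}(x)=\operatorname{rf}(y)$. We will find a subgrid in which $x$ and $y$ are both sinks, or both sources. Since any subgrid of a USO has a unique sink, and, by the equivalence quoted above from \cite[Theorem 1]{gaertnerMorrisRuest08}, a unique source, this is a contradiction.

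To build the subgrid, let $D=\{i : x_i\neq y_i\}\neq\emptyset$. For coordinates $i\notin D$, fix $S_i=\{x_i\}$. For each $i\in D$, compare the two stars of $x$ and $y$ in dimension $i$ and collect the points whose behaviour differs:
\[
S_i=\{x_i,y_i\}\cup\{\,k\in[n_i] : \text{the edges } x\!-\!x^{(i\to k)} \text{ and } y\!-\!y^{(i\to k)} \text{ are oriented differently relative to } x,y\,\}.
\]
The choice of these sets needs care. The cleanest version is a minimality argument. Among all pairs of distinct vertices with equal refined index, inside all subgrids containing them, choose a pair and a subgrid $S$ of minimum total size. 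Then restrict each dimension $i\in D$ to coordinates where $x$ and $y$ disagree in orientation. Equal counts $\operatorname{rf}_i(x)=\operatorname{rf}_i(y)$ mean that the out-neighbour sets of $x$ and $y$ in dimension $i$ have the same size. Since $x_i\neq y_i$, one can pair up the coordinates where they disagree, and removing agreeing coordinates preserves the equality of refined indices.

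Rather than push the minimal-counterexample route through, I would instead follow the standard route: induction on the number of vertices, using the hypercube case as a model. Pick a dimension $i$ with $n_i\ge2$ and an index $k\in[n_i]$. The subgrid with $x_i\neq k$ (removing one slice) and the slice $x_i=k$ are both USOs, so by induction their refined index maps are bijections onto the appropriate boxes. On the remaining part, each vertex's refined index decreases by one in coordinate $i$ exactly when its edge to the slice is outgoing. Combining this with the bijection on the slice, I would show that the vertices whose edge to the slice points out are exactly those whose restricted index has a certain shape, and that the shift fills in the top layer $\operatorname{rf}_i=n_i-1$ precisely once. The main obstacle is this gluing step. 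One has to check that the vertices whose edge into the slice is incoming, together with the slice itself, produce each index in the layer $\operatorname{rf}_i=0,\dots$ exactly once. I would try to control this by observing that for each fixed "fibre" vertex in the other coordinates, the one-dimensional subgrid along dimension $i$ is a USO of a complete graph, hence a transitive tournament whose out-degrees are $0,\dots,n_i-1$. This one-dimensional fact is the base case.

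Concretely, then, the plan is the following. First, prove the one-dimensional case: unique sinks on all subsets of a complete graph force acyclicity, hence a transitive tournament, hence bijective out-degrees. Second, prove injectivity in general by contradiction, using the subgrid $S=S_1\times\cdots\times S_r$ with $S_i=\{x_i,y_i\}$ for $i\in D$. This is essentially the hypercube argument of Szab\'o and Welzl: a subcube where $x$ and $y$ have equal outmap, handled via the characterization that in a cube USO any two vertices differ in outmap on some coordinate where they differ. To reduce to that subcube, I would need equal refined index to imply equal outmap on a suitable $2\times\cdots\times2$ subgrid. I expect to pass to such a subgrid by choosing, in each dimension $i\in D$, the coordinates where $x$ and $y$ see an orientation mismatch. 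The counting equality guarantees such coordinates exist in pairs, which lets the search for a two-element subgrid descend.
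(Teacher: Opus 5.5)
The paper does not prove this lemma; it quotes it from \cite{gaertnerMorrisRuest08}. So your argument has to stand on its own, and at present it has a genuine gap. The slice-gluing induction is dropped exactly at the step that carries the content. The minimal-counterexample route stops after the (correct) deletion of slices $k\notin\{x_i,y_i\}$ on which $x$ and $y$ agree.

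The plan you finally commit to rests on an unproved claim: that $\operatorname{rf}(x)=\operatorname{rf}(y)$ forces equal outmaps on a $2\times\cdots\times2$ subgrid containing $x$ and $y$. That subgrid can only be $\prod_{i\in D}\{x_i,y_i\}$. Counts over whole lines say nothing about the particular edges $\{x,x^{(i\to y_i)}\}$ and $\{y,y^{(i\to x_i)}\}$, and the remark that ``mismatches come in pairs'' does not change this.

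The local data your plan uses does not force the conclusion. On $[3]\times[2]$, orient the rows transitively as $(3,1)\to(1,1)\to(2,1)$ and $(1,2)\to(2,2)\to(3,2)$, and the columns as $(1,2)\to(1,1)$, $(2,1)\to(2,2)$, $(3,1)\to(3,2)$. Then:
\begin{itemize}
\item $\operatorname{rf}(1,1)=\operatorname{rf}(2,2)=(1,0)$;
\item both subgrids containing $x=(1,1)$ and $y=(2,2)$ have a unique sink;
\item on $\{1,2\}\times\{1,2\}$ the outmaps of $x$ and $y$ are $\{1\}$ and $\emptyset$;
\item in every subgrid containing $y$, $x$ has an in-edge from $(1,2)$ and an out-edge to $(2,1)$, so your first plan (making $x,y$ simultaneously sinks or sources) has no target either.
\end{itemize}
What excludes this configuration is, for instance, the subgrid $\{1,3\}\times\{1,2\}$, which has the two sinks $(1,1)$ and $(3,2)$ and does not contain $y$. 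A proof therefore has to use subgrids beyond those spanned by $x$ and $y$.

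One route that works is induction on the dimension, contracting lines instead of comparing $x$ with $y$.
\begin{itemize}
\item \emph{Contraction.} For $f\in\Gamma':=[n_2]\times\cdots\times[n_r]$, let $s(f)$ be the sink of the line $[n_1]\times\{f\}$. For adjacent $f,f'$, set $f\to f'$ iff the sink of $[n_1]\times\{f,f'\}$, which is $s(f)$ or $s(f')$, is $s(f')$.
\item \emph{It is a USO.} Every sink of $[n_1]\times T$ is a line sink. Hence sinks of a subgrid $T\subseteq\Gamma'$ in this contracted orientation correspond to sinks of $[n_1]\times T$, so it is a USO. Its refined index at $f$ is $(\operatorname{rf}_2,\dots,\operatorname{rf}_r)(s(f))$.
\item \emph{Consequence.} By induction, for every $\tau$ exactly one line sink has $(\operatorname{rf}_2,\dots,\operatorname{rf}_r)=\tau$.
\item \emph{The vertices $u_k$.} Fix $\tau$. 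Induction on the slices gives exactly one $u_k$ in each slice $\{k\}\times\Gamma'$ with $(\operatorname{rf}_2,\dots,\operatorname{rf}_r)(u_k)=\tau$. These values are unchanged in every subgrid $K\times\Gamma'$ with $K\subseteq[n_1]$.
\item \emph{The tournament.} Set $k\Rightarrow k'$ iff $u_k\to u_k^{(1\to k')}$. Then $u_k$ is the sink of its line in $K\times\Gamma'$ iff $k$ is a sink of this relation restricted to $K$. Applying the consequence above to $K\times\Gamma'$ shows that $\Rightarrow$ is a tournament on $[n_1]$ with a unique sink on every subset, hence transitive.
\item \emph{Conclusion.} $\operatorname{rf}_1(u_k)$ is the out-degree of $k$, so these values are exactly $0,\dots,n_1-1$. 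This gives the bijection.
\end{itemize}
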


\begin{figure}[tb]
    \centering
    \includegraphics[page=1]{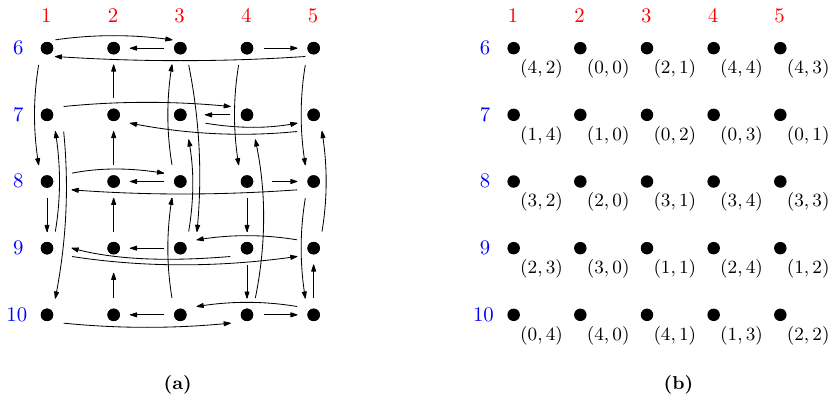}
    \caption{(a): Example of an admissible USO. Transitive arrows are not drawn. (b): Refined index of the orientation in~(a).}
    \label{fig:pla2uso:uso}
\end{figure}

\section{Block Colored Pseudoline Arrangements and Admissible USOs}
\label{pla2uso_known}

An \textit{arrangement of (straight) lines} is a finite family of straight lines~$g_1, \cdots, g_n\subset\mathbb{R}^2$ in the Euclidean plane, no two of which are parallel. As the lines are non-parallel, they pairwise cross in exactly one point. 

A \textit{(Euclidean) arrangement of pseudolines} (or simply \textit{pseudoline arrangement}) is a generalization of an arrangement of lines; it maintains the pairwise crossing property, but lifts the restriction to straight lines. Formally, an~\textit{arrangement of pseudolines} is a finite family of simple curves~$f_1, \cdots, f_n: \mathbb{R}\to\mathbb{R}^2$ with \[\lim_{t\to\infty} \lVert f_i(t)\rVert = \lim_{t\to-\infty} \lVert f_i(t)\rVert = \infty\;,\] called \textit{pseudolines}, which fulfill the property that each two of them intersect at exactly one point where they cross. We label the pseudolines from~$1$ to~$n$ in counterclockwise order as in~Figure~\ref{fig:pla2uso:arrangement}(a). In this article, we only consider \textit{simple} arrangements,~i.e.~at each crossing exactly two pseudolines meet. They are in correspondence with~\textit{$3$-signotopes}. Signotopes were introduced by Manin and Schechtmann~\cite{maninSchechtman89} and further developed by Felsner and Weil~\cite{felsnerWeil01}. They are defined as follows.

For an~$(r+1)$-subset $A\subseteq [n]$ and~$1\leq i \leq n$, let~$A^{\lfloor i\rfloor}$ be the~$r$-subset of~$A$ that contains all elements except the~$i$-th element in increasing order. For example,~\mbox{$\{2, 5, 7, 8\}^{\lfloor2\rfloor} = \{2, 7, 8\}$}. For a map~$\chi: \binom{[n]}{r}\to\{-, +\}$, we define the notation~$\chi(A) := \left(\chi(A^{\lfloor 1\rfloor}), \cdots, \chi(A^{\lfloor r+1\rfloor})\right)$ as the sequence of signs that is obtained by successively omitting an element and applying~$\chi$. 

Now, for~$1\leq r \leq n$, \mbox{an~\textit{$r$-signotope}} is a map~$\chi: \binom{[n]}{r}\to\{-, +\}$ such that for each~\mbox{$(r+1)$-subset}~$A\subseteq [n]$ the sequence~$\chi(A)$ contains at most one sign change. We call this the~\textit{monotonicity property} of signotopes. For example, for~$3$-signotopes, the monotonicity property implies that for each tuple of four elements~$1\leq i < j < k < l \leq n$ we have \begin{align*}
    \left(\chi(jkl), \chi(ikl), \chi(ijl), \chi(ijk)\right)\in\big\{ &(-,-,-,-), (-,-,-,+), (-,-,+,+),\\
    & (-,+,+,+), (+,+,+,+), (+,+,+,-),\\
    &(+,+,-,-), (+,-,-,-))\big\}\; .
\end{align*}

Here, we use a simplified notation~$\chi(ijk) := \chi\left(\{i,j,k\}\right)$, which we will use from time to time for convenience.

In a pseudoline arrangement, every triple of pseudolines~$1\leq i < j < k\leq n$ forms exactly one of the two configurations shown in Figure~\ref{fig:triple_orientations}. The triple orientations of pseudoline arrangements can be precisely characterized as~$3$-signotopes (cf. \cite[Theorem 7]{felsnerWeil01}).

\begin{figure}[tbh]
    \centering
    \includegraphics[page=1]{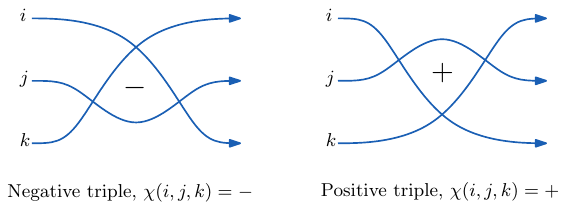}
    \caption{Each triple of pseudolines~$i<j<k$ is either negative or positive.}
    \label{fig:triple_orientations}
\end{figure}

A~\textit{block colored pseudoline arrangement} is a simple arrangement of~$n$ pseudolines that consists of two consecutive blocks of~$r$ red and~$b$ blue pseudolines, $n=r+b$. We assume the pseudolines to be numbered as usual with~$1, \cdots, n$ in counterclockwise order, starting with the red pseudolines~$1, \cdots, r$ and continuing with the blue pseudolines~\mbox{$r+1, \cdots, n$}; see the example in Figure~\ref{fig:pla2uso:arrangement}(a).

It will be convenient to consider two block colored pseudoline arrangements~$\mathcal{A}$ and~$\mathcal{A}'$ with the same number of red and number of blue pseudolines as isomorphic, if they form the same arrangement up to monochromatic triangle flips. This notion of isomorphism captures the information in which order a blue (red) pseudoline intersects the red (blue) pseudolines, and also the order in which it intersects a pair of a red and a blue pseudoline.

The following theorem is one of the main results of~\cite{felsnerGaertnerTschirschnitz05}.

\begin{theorem}[Felsner, Gärtner \& Tschirschnitz~\cite{felsnerGaertnerTschirschnitz05}]\label{thm:felsnerGaertnerTschirschnitz}
    Any two-dimensional USO is admissible if and only if it is induced by a block colored pseudoline arrangement. Moreover, it is realizable if and only if it is induced by an arrangement of straight lines. 
\end{theorem}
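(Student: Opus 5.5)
The plan has three parts: define the orientation induced by an arrangement, prove that such orientations are admissible USOs, and prove the converse, which is the real work.

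\textbf{The induced orientation.} Given a block colored arrangement with red pseudolines $1,\dots,m$ and blue pseudolines $m+1,\dots,m+n$, the vertex $(i,j)$ of the $(m\times n)$-grid corresponds to the crossing of red line $i$ with blue line $m+j$. Take a grid edge between $(i,j)$ and $(i',j)$. Both crossings lie on blue line $m+j$, so I orient the edge by the order in which $m+j$ meets red lines $i$ and $i'$ (say toward the later one when traversing from left to right). Edges within a column are oriented the same way, using the red line. Equivalently, the orientation of each edge is the sign of a bichromatic triple $\chi(i,i',m+j)$ or $\chi(i,m+j,m+j')$ of the associated $3$-signotope. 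With this description, monochromatic triangle flips do not change the orientation, so isomorphic arrangements induce the same USO.

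\textbf{Arrangements give admissible USOs.} A subgrid corresponds to a subarrangement, so it suffices to show that every such arrangement has a unique sink. The monotonicity property on a $4$-set made of two red and two blue lines rules out the forbidden $2\times 2$ pattern, a directed $4$-cycle, and this gives the USO property on $2\times2$ subgrids. For larger grids, a natural candidate for the sink is the bichromatic crossing that is extreme (e.g.\ rightmost) among all red--blue crossings: every red and every blue line through it has already crossed all the others. I would check via the refined-index bijection of Lemma~\ref{lemma:usoIffRifBij} that exactly one vertex has refined index $(0,0)$. For admissibility, I would invoke the characterization that a two-dimensional USO is admissible iff it avoids the double twist $\text{DT}$, and show directly that the $3\times3$ pattern $\text{DT}$ cannot come from any block colored arrangement of three red and three blue lines. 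This is a finite check using the signotope conditions on $4$-subsets.

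\textbf{Admissible USOs come from arrangements.} This is the hard direction, and I would go by induction on $m+n$. Delete one blue column, say the one containing the global source. By induction the remaining grid is induced by an arrangement, and the task is to insert a new blue pseudoline. The orientation within the deleted column prescribes the order in which the new line must cross the red lines. The row edges prescribe, for each red line $i$, whether the new crossing lies before or after the crossing of $i$ with each old blue line. Together these define a required sequence of cells that the new pseudoline must traverse.

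The main obstacle is showing that this sequence is consistent, so that a pseudoline realizing it exists and crosses each old line exactly once. I would express the requirement as a sign extension $\chi(\cdot,\cdot,\text{new})$ of the $3$-signotope and verify monotonicity on every $4$-subset that contains the new element. The $4$-subsets with two red and two blue lines reduce to $2\times2$ USO conditions. The purely red case is equivalent to the column order being a valid sweep, and I expect this is where excluding $\text{DT}$ (together with the $3\times 2$ subgrid conditions) is essential. Once the extension is shown to be a signotope, Felsner--Weil's correspondence \cite[Theorem 7]{felsnerWeil01} yields the pseudoline arrangement.

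\textbf{Straight lines.} For the realizability statement, given lines, I would construct a polytope combinatorially equivalent to $\Delta_{m-1}\times\Delta_{n-1}$ via the standard duality between point configurations and linear functions, in the manner of the Gale-dual description of simple polytopes with $d+2$ facets. Conversely, I would read off the lines from the two-dimensional Gale diagram of the polytope together with the objective function.
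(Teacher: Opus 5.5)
\textbf{Main gap: the inductive step.} The step cannot work as you set it up, because you treat the arrangement $\mathcal{A}'$ supplied by the induction hypothesis as fixed.

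A labelled USO need not be induced by an arrangement carrying the same labels, so the statement only holds up to renaming lines within a colour class. Hence $\mathcal{A}'$ comes with some naming of the red lines. That naming fixes on which side of the grid each red--red crossing lying outside the grid of $\mathcal{O}'$ is placed. But deleting a blue line can move a red--red crossing from inside the grid of $\mathcal{O}$ to outside the grid of $\mathcal{O}'$.

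The smallest instance has red $p,p'$ and blue $q,q_0$. Both red lines meet $q$ before $q_0$, and $p\cap p'$ lies between these two crossings, so $q$ and $q_0$ meet $p,p'$ in opposite orders. Whichever blue column you delete, $\mathcal{O}'$ is a single edge. Up to swapping the names $p,p'$, it is induced both by the arrangement with $p\cap p'$ before the surviving blue line and by the one with $p\cap p'$ after it. If the hypothesis hands you the wrong one, the new blue line must go on the far side of the surviving one on both red lines. Then $p\cap p'$ lies on the same side of both blue lines, both meet $p,p'$ in the same order, and $\mathcal{O}$ is not reproduced. In signotope terms, all four triples of $\{p,p',q,q_0\}$ are bichromatic, hence forced by $\mathcal{O}$. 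With the inherited names their sign sequence has two sign changes, whatever you do.

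The same example refutes your claim that the two-red--two-blue $4$-subsets ``reduce to $2\times 2$ USO conditions''. Monotonicity on such a set excludes $8$ of the $16$ sign patterns, but only $4$ of those are non-USOs; the other four are USOs whose naming is incompatible. So the induction must carry the naming data, i.e.\ the linear extensions of $(\mathcal{R},\prec_\mathcal{R})$ and $(\mathcal{B},\prec_\mathcal{B})$ in Theorem~\ref{thm:felsner_gaertner_tschirschnitz_bij}. The red order must extend the crossing poset of $\mathcal{O}$, which can be strictly larger than that of $\mathcal{O}'$. You also never say where the new element sits inside the blue block, and that choice likewise decides whether an extension exists.

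\textbf{The core step is missing, and how the paper avoids induction.} Even with the naming fixed correctly, the decisive step is only announced (``I expect\ldots''). You do not say how to choose the signs $\chi(q,q',q_0)$ for old blue $q,q'$; different red lines $p$ can force different values through $\{p,q,q',q_0\}$. Nor do you show how excluding $\text{DT}$ removes all such conflicts.

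The paper, following \cite{felsnerGaertnerTschirschnitz05}, avoids induction altogether:
\begin{itemize}
\item it reads the refined indices as grid points;
\item it uses the bijectivity of Lemma~\ref{lemma:usoIffRifBij} on every row and column to thread monotone blue and red curves through them;
\item it uses admissibility (proof cited) to get that curves of equal colour cross at most once inside the grid;
\item it completes the arrangement outside the grid according to a pair of linear extensions of the crossing posets.
\end{itemize}
That last step is exactly the bookkeeping your induction lacks.

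\textbf{Smaller points.}
\begin{itemize}
\item In the easy direction, invoking Lemma~\ref{lemma:usoIffRifBij} to get a unique vertex with refined index $(0,0)$ is circular, since that lemma presupposes the USO property. Uniqueness follows instead from your $2\times 2$ check: two sinks of a two-dimensional subgrid differ in both coordinates and are both sinks of the $2\times 2$ subgrid they span.
\item On $2\times 2$ grids, monotonicity must exclude the pattern with two opposite sinks, not only the directed $4$-cycle.
\item $\text{DT}$ is a USO of a $2\times 3$ grid, since three disjoint paths are required; it is not a $3\times 3$ pattern.
\item The straight-line part is only gestured at. The paper does not prove it either; it cites \cite{felsnerGaertnerTschirschnitz05}.
\end{itemize}
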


In the following, we will explain what is meant in Theorem~\ref{thm:felsnerGaertnerTschirschnitz} by saying that an arrangement induces an admissible orientation, sketch its proof, and formulate a precise bijection.

\subsection{Pseudoline arrangements induce admissible orientations}

Assume that we have given a block colored arrangement~$\mathcal{A}$ and want to assign to it a unique sink orientation of size~$(b, r)$. Let~$p$ be a red and~$q$ a blue pseudoline. With~$x$ being the number of blue pseudolines that~$p$ crosses before crossing~$q$, and~$y$ being the number of red pseudolines that~$q$ crosses before crossing~$p$, we call~$(x, y)$ the \textit{crossing index}~$\operatorname{cri}_\mathcal{A}(p, q)$.

It has been shown in~\cite{felsnerGaertnerTschirschnitz05} that~$\mathcal{A}$ can always be drawn as a \textit{grid drawing}; see the example in~Figure~$\ref{fig:pla2uso:arrangement}$: There is a~$(b \times r)$-grid of points where the red pseudolines pass through from top to bottom as vertically monotonic curves, while the blue pseudolines pass through from left to right as horizontally monotonic curves, and each crossing between red and blue pseudolines lies on a grid point. The existence of a grid drawing can also be deduced from a result that we will give later (Corollary~\ref{cor:refined_index_is_bijection}).

Note that in a grid drawing, the~$r\cdot b$ red-blue crossings must be bijectively distributed over the~$r\cdot b$ grid points, so that each red pseudoline~$p$ contains exactly~$b$ of them and each blue pseudoline~$q$ exactly~$r$. Therefore, the monotonicity implies that the crossing of~$p$ with~$q$ must be placed at grid point~$\operatorname{cri}_\mathcal{A}(p, q)$ (we index the grid points starting with~$(0, 0)$ from the top left corner).

So the crossing index of~$\mathcal{A}$ can be used to construct the grid drawing. But it also determines a unique sink orientation: Write down the crossing indices table-wise for all~$p$ and~$q$ and interpret them as refined indices of a unique sink orientation. The example in~Figure~\ref{fig:pla2uso:uso} is the unique sink orientation obtained from the arrangement in Figure~\ref{fig:pla2uso:arrangement}(a). For example, the crossing index of the first red and the first blue pseudoline is~$\operatorname{cri}_\mathcal{A}(1,6) = (4, 2)$, so we get~$\operatorname{rf}(1,1) = (4, 2)$.

One can check that for inducing a double twist~$\text{DT}$, a pair of pseudolines has to cross twice, which is forbidden in a pseudoline arrangement. Therefore, unique sink orientations induced by pseudoline arrangements are always admissible.

\begin{figure}[tb]
    \centering
    \includegraphics[page=2]{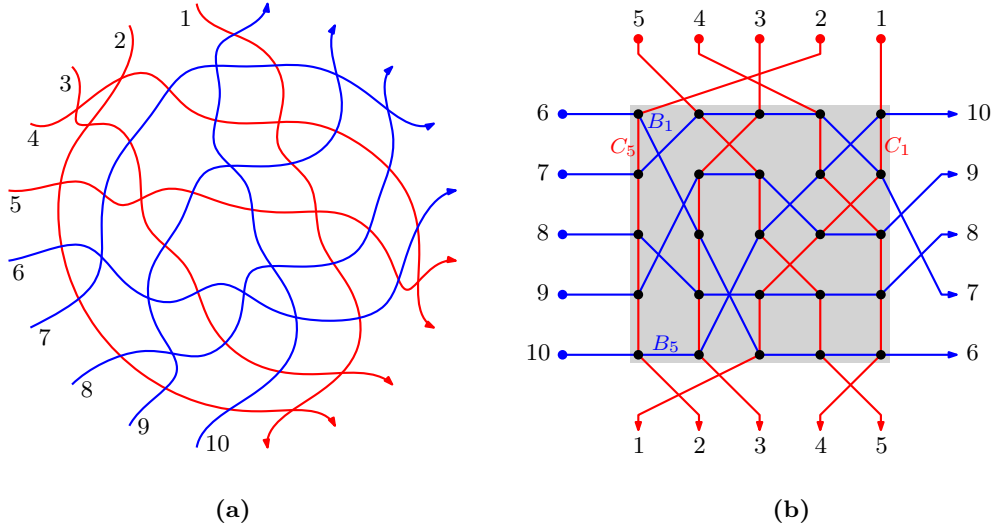}
	
    \caption{(a):~Block colored pseudoline arrangement~$\mathcal{A}$ of~$r=5$ red and~$b=5$ blue pseudolines. (b):~Grid drawing of~$\mathcal{A}$: Every crossing between red and blue pseudolines lies on a point of~a~$(5\times 5)$-grid.}
    \label{fig:pla2uso:arrangement}
\end{figure}

\subsection{Admissible orientations induce pseudoline arrangements}

Given an admissible unique sink orientation~$\mathcal{O}$ of size~$(b, r)$, we can construct a block colored arrangement~$\mathcal{A}$ which induces~$\mathcal{O}$ as described in the previous subsection. The orientation~$\mathcal{O}$ is determined by its refined index~$\operatorname{rf}_\mathcal{O}$. For each row~$1\leq i\leq b$, the map~\mbox{$j \mapsto\left(\operatorname{rf}_\mathcal{O}(i, j) \right)_2$} is bijective, as it is the refined index of the unique sink orientation induced on the one-dimensional subgrid~$\{i\}\times [r]$~(Lemma~\ref{lemma:usoIffRifBij}). Therefore, if we interpret the tuples~$\left\{ \operatorname{rf}_\mathcal{O}(i, 1), \cdots, \operatorname{rf}_\mathcal{O}(i, r) \right\}$ as grid coordinates, we have exactly one point in each column and can interpolate a horizontally monotonic blue curve through them. Analogously, every~$1\leq j\leq r$ defines a vertically monotonic red curve that passes through the grid. The bijectivity of~$\operatorname{rf}_\mathcal{O}$ further implies that each grid point is contained in exactly one blue and one red curve. We refer the reader to~\cite{felsnerGaertnerTschirschnitz05} for the proof of the fact that the admissibility of the orientation implies that each two blue and each two red curves cross at most once.

So far we have that an admissible orientation~$\mathcal{O}$ determines a family of red and blue curves within the convex hull of the grid points (gray box in Figure~\ref{fig:pla2uso:arrangement}(b)); each blue curve crosses each red curve exactly once, and curves of the same color cross at most once. Not much is missing for a pseudoline arrangement; clearly, one can complete the missing crossings outside of the grid. However, this completion is not unique; it depends on how we identify the curves with the pseudolines~$1,\cdots, r+b$.

Let us give an example. Call~$B_i$ the blue curve that contains the grid point~$(i - 1, 0)$, and~$R_j$ the red curve that contains the point~$(0, r - j)$. In order to obtain the arrangement in Figure~\ref{fig:pla2uso:arrangement}, we have to identify
\begin{align*}
    (B_1, B_2, B_3, B_4, B_5) &= (6, 7, 8, 9, 10)\\
    (R_1, R_2, R_3, R_4, R_5) &= (1, 4, 3, 5, 2)
\end{align*}
and connect the curves to the labels~$1, \cdots, 10$ that must appear in counterclockwise order around the grid.

Let~$\mathcal{B} := \{B_1, \cdots, B_b\}$ be the set of blue curves, and~$\mathcal{R} := \{R_1, \cdots, R_r\}$ be the set of red curves, both sets indexed as above. The freedom that we have when identifying~$\mathcal{B}\cup\mathcal{R}$ with the pseudolines~$1,\cdots, r+b$ can be expressed in terms of partial orders on the sets~$\mathcal{B}$ and~$\mathcal{R}$. For $i < j$, say~$B_i \prec_\mathcal{B} B_j$ if~$B_i$ and~$B_j$ cross, and~$R_i \prec_\mathcal{R} R_j$ if~$R_i$ and~$R_j$ cross. Clearly, these partial orders depend on the orientation~$\mathcal{O}$, as the entire construction of the curves~$\mathcal{B}$ and~$\mathcal{R}$ depends on~$\mathcal{O}$.

\begin{theorem}\label{thm:felsner_gaertner_tschirschnitz_bij}
    There is a bijection between block colored arrangements of~$r$ red and~$b$ blue pseudolines and admissible orientations of a grid of size~$(b, r)$ together with a pair of linear extensions of~$(\mathcal{B}, \prec_\mathcal{B})$ and~$(\mathcal{R}, \prec_\mathcal{C})$.
\end{theorem}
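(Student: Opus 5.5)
We define the two maps explicitly and check that they are mutually inverse. The forward map $\Phi$ sends a block colored arrangement $\mathcal{A}$ to the orientation $\mathcal{O}_\mathcal{A}$ whose refined index is given by the crossing indices, together with two linear orders. By the discussion preceding Theorem~\ref{thm:felsner_gaertner_tschirschnitz_bij}, $\mathcal{O}_\mathcal{A}$ is an admissible USO of size $(b,r)$. Taking the grid drawing of $\mathcal{A}$, every blue pseudoline $q$ restricts to one of the curves $B_i$ and every red pseudoline $p$ to one of the curves $R_j$. The two linear orders are the orders in which the labels $r+1,\dots,n$ are assigned to $B_1,\dots,B_b$, and the labels $1,\dots,r$ to $R_1,\dots,R_r$.

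First we check that these orders are linear extensions. Suppose $i<j$ and $B_i,B_j$ cross inside the grid. Since two pseudolines cross exactly once, the relative order of their endpoints on the two sides of the grid is then forced. Concretely, $B_i$ starts above $B_j$ on the left boundary. If they cross inside the grid, then outside the grid they must not cross again, and the counterclockwise labelling around the boundary then forces $\mathrm{label}(B_i)<\mathrm{label}(B_j)$. If they do not cross inside the grid, they cross outside it, on the left or on the right, and either label order is possible. The same argument applies to the red curves. The condition is symmetric in the two colours.

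The inverse map $\Psi$ goes the other way. Given an admissible $\mathcal{O}$ and linear extensions $L_\mathcal{B}$ and $L_\mathcal{R}$, we build the curves $\mathcal{B}$ and $\mathcal{R}$ by interpolation as in the previous subsection. By \cite{felsnerGaertnerTschirschnitz05}, curves of the same colour cross at most once inside the grid. We then label $\mathcal{B}$ by $r+1,\dots,n$ following $L_\mathcal{B}$, and $\mathcal{R}$ by $1,\dots,r$ following $L_\mathcal{R}$. Outside the grid box we route the curves to their prescribed positions in counterclockwise order around a large circle. Each blue curve leaves the box on the left and on the right. Between the left boundary, where the order is $B_1,\dots,B_b$, and the far-left end, where the order is prescribed by the labels, we realise the required permutation with a wiring diagram. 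We must arrange that exactly those pairs which did not cross inside the box cross exactly once outside it, on one side. This is the main obstacle: the labelling determines, for every pair, on which side (left or right) an inversion is needed, and it must be consistent.

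To handle it, we insert the crossings for non-crossing pairs at the left or at the right according to the label order. A pair $B_i,B_j$ with $i<j$ that does not cross inside the box needs an outside crossing. If $\mathrm{label}(B_i)>\mathrm{label}(B_j)$, we put that crossing on the left. Otherwise we put it on the right. Pairs that do cross inside the box have $\mathrm{label}(B_i)<\mathrm{label}(B_j)$ because $L_\mathcal{B}$ is a linear extension, so they need no outside crossing.

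The permutation on each side then has its inversion set equal to a set of pairs. We must check that this set is realizable by a wiring diagram, that is, that it is the inversion set of a permutation. On the left it is exactly the inversion set of the map $i\mapsto\mathrm{label}(B_i)$, so this holds. On the right, the order along the right boundary is determined by the in-box crossings. Composing with the left permutation gives a permutation whose inversions are precisely the remaining pairs, so this side is realizable as well.

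Red curves are handled identically at the top and bottom. The red–blue crossings all lie inside the box, so the result is a simple block colored arrangement whose red–blue crossing pattern is $\mathcal{O}$.

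Finally, we check that $\Phi\circ\Psi$ and $\Psi\circ\Phi$ are the identity, up to the chosen isomorphism of arrangements, which ignores monochromatic triangle flips. For $\Phi\circ\Psi$: the crossing indices of $\Psi(\mathcal{O},L_\mathcal{B},L_\mathcal{R})$ reproduce $\mathrm{rf}_\mathcal{O}$ because of where the grid points lie, and the labels reproduce the linear extensions. For $\Psi\circ\Phi$: the arrangement is determined, up to this isomorphism, by which pairs cross, in what order each pseudoline meets the other colour, and the labels. All of this is recovered from the refined index together with the labels.
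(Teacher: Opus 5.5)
Your proposal is correct and follows essentially the same route as the paper. Crossing same-colour curves must receive increasing labels (otherwise they would cross twice), every other same-colour pair gets its single crossing outside the grid on the side dictated by the label order, and the labelling is recovered from the arrangement, which gives injectivity. Two remarks on how your write-up relates to the paper's. Your explicit wiring-diagram check of the left and right completions fills in the paper's remark that ``clearly, one can complete the missing crossings outside of the grid.'' Your $\Psi\circ\Phi=\mathrm{id}$ step relies on the same implicit facts as the paper: the orientation together with the labels determines an arrangement up to monochromatic triangle flips, and orientations are taken up to relabelling rows and columns.
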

\begin{proof}
    We have seen that the admissible orientations of a~$(b, r)$-grid are in bijection with families~$(\mathcal{B}, \mathcal{R})$ of blue paths $\mathcal{B} := \{B_1, \cdots, B_b\}$ and red paths~$\mathcal{R} := \{R_1, \cdots, R_r\}$ within the grid. We claim that the linear extensions of~$(\mathcal{B}, \prec_\mathcal{B})$ and~$(\mathcal{R}, \prec_\mathcal{R})$ correspond exactly to the valid ways of identifying~$\mathcal{B}$ and~$\mathcal{R}$ with the blue and red pseudolines.
    
    If, for~$i < j$, $B_i$ and $B_j$ do not intersect, then, regardless of which blue pseudolines~$p_i$,~$p_j$ we identify them with, eventually~$p_i$ will cross~$p_j$ exactly once. This crossing will lie to the right of the grid if~$p_i < p_j$, and to the left of the grid otherwise. However, if~$B_i$ and~$B_j$ cross, we have to identify them with pseudolines $p_i < p_j$, as otherwise, if $p_j < p_i$, the pseudolines $p_i$ and $p_j$ would cross twice. Hence, an identification of~$\mathcal{B}$ with the blue pseudolines is valid if and only if it is a linear extension of~$(\mathcal{B}, \prec_\mathcal{B})$.

    Moreover, an arrangement determines~$(B_1, \cdots, B_b) = (p_1, \cdots, p_b)$ uniquely: If~$B_i$ and~$B_j$ do not intersect, swopping~$p_i$ and~$p_j$ causes a blue-blue crossing to move from the left side of the grid to the right side of the grid, or the other way around. This cannot be achieved by only flipping monochromatic triangles, so this corresponds to a different block colored arrangement. 
    
    Analogous arguments hold for the identification of the red curves.
\end{proof}

\section{Block Signotopes and Unique Sink Orientations}
\label{pla2uso_generalized}

Before, we described a way to obtain a USO~$\mathcal{O}$ from a block colored arrangement~$\mathcal{A}$. Let~$\chi_\mathcal{A}$ be the $3$-signotope corresponding to~$\mathcal{A}$. Observe that we can read off the orientation~$\mathcal{O}$ directly from~$\chi_\mathcal{A}$: Say, pseudolines~$i, j$ are red and pseudoline~$k$ is blue,~$i < j < k$. If~$\chi_\mathcal{A}(i, j, k) = +$, pseudoline~$k$ first crosses~$i$ before crossing~$j$, hence,~$\left(\operatorname{cri}_\mathcal{A}(i, k)\right)_2 < \left(\operatorname{cri}_\mathcal{A}(j, k)\right)_2$, and in the resulting grid orientation~$\mathcal{O}$ we must have~$(k, i) \leftarrow (k, j)$. So, a sign of triples consisting of two elements of one color and one of the other color determines the orientation of exactly one edge. This is precisely what we aim to generalize for grids of arbitrary dimension~$r$ and~$(r+1)$-signotopes.

We call a partition $[n] = C_1 \;\dot{\cup}\;\cdots\;\dot{\cup}\; C_r$ a~\textit{block partition}, if~$c < c'$ for all $c\in C_i, c'\in C_j$ whenever $i < j$. For example,~$(C_1, C_2, C_3) = (\{1, 2\}, \{3, 4, 5\}, \{6, 7\})$ is a block partition of~$[7]$. A signotope $\chi: \binom{[n]}{r+1}\to\{-, +\}$ of rank $r+1$ together with a block partition~\mbox{$[n] = C_1 \;\dot{\cup}\;\cdots\;\dot{\cup}\; C_r$} is called a~\textit{block signotope} of~\textit{rank}~$(r+1)$. The previous subsection treated the special case~$r=2$; the two classes were the classes of red and blue pseudolines. We are now going to generalize this to a greater number of classes.

Let $G$ be the grid on~$C_1\times\cdots\times C_r$. The block signotope~$\chi$ naturally induces a grid orientation~$O_\chi$ on $G$: Suppose $v, v'\in C_1\times\cdots\times C_r$ differ exactly in the $i$-th component, so~\mbox{$v=(c_1, \cdots, c_{i-1}, c, c_{i+1}, \cdots, c_r)$} and $v'=(c_1, \cdots, c_{i-1}, c', c_{i+1}, \cdots, c_r)$ with~\mbox{$c, c'\in C_i$}, and assume $c < c'$. If \mbox{$\chi(c_1, \cdots, c_{i-1}, c, c', c_{i+1}, \cdots, c_r)=+$}, then, in~$\mathcal{O}_\chi$, orient $v\to v'$; otherwise, orient $v'\to v$. Note that, in the case~$r=3$, compared to the construction in the previous section,~$\mathcal{O}_\chi$ is now exactly reversed and transposed. Indeed, if~$i < j < k$ and~$\chi(i, j, k) = +$, then, in~$\mathcal{O}_\chi$, we have~$(i, k) \rightarrow (j, k)$.

\begin{theorem}\label{thm:block_signotopes_uso}
    For every block signotope~$\chi: \binom{[n]}{r+1}\to\{-, +\}$ with block partition~$[n] = C_1 \;\dot{\cup}\;\cdots\;\dot{\cup}\; C_r$,~$\mathcal{O}_\chi$ is a unique sink orientation.
\end{theorem}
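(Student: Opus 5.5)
We reduce Theorem~\ref{thm:block_signotopes_uso} to the statement that \emph{the whole grid has exactly one sink}, and prove that by induction on $n$ and $r$.

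\textbf{Reduction to the global sink.} Let $S=S_1\times\cdots\times S_r$ be a subgrid with $\emptyset\neq S_i\subseteq C_i$. Restricting $\chi$ to $\binom{S_1\cup\cdots\cup S_r}{r+1}$ again gives a signotope: every $(r+1)$-subset of $S_1\cup\cdots\cup S_r$ is an $(r+1)$-subset of $[n]$, so monotonicity is inherited. Together with the induced block partition $(S_1,\dots,S_r)$ it is a block signotope, after relabelling the elements order-preservingly. By the definition of $\mathcal{O}_\chi$, the orientation it induces is exactly the restriction of $\mathcal{O}_\chi$ to $S$. Hence it suffices to show that for \emph{every} block signotope the grid $C_1\times\cdots\times C_r$ has a unique global sink.

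\textbf{Base case: one-dimensional lines.} Fix all coordinates except the $i$-th, say $c_j\in C_j$ for $j\neq i$. For $c<c'<c''$ in $C_i$, the three sets obtained by omitting $c$, $c'$ or $c''$ from $\{c_1,\dots,c,c',c'',\dots,c_r\}$ sit at three \emph{consecutive} positions of the sequence $\chi(A)$. Therefore the triple $(\chi(\dots c'c''\dots),\chi(\dots cc''\dots),\chi(\dots cc'\dots))$ has at most one sign change. This excludes the two patterns $(+,-,+)$ and $(-,+,-)$, which are exactly the cyclic triangles. So every line carries a transitive tournament and has a unique sink. This also settles the case $\dim G\le 1$.

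\textbf{Inductive step.} Let $n$ be the largest element, $n\in C_r$, and assume $|C_r|\ge 2$ (otherwise relabel blocks or drop the trivial dimension).
\begin{itemize}
\item The subgrid $R$ on $C_1\times\cdots\times (C_r\setminus\{n\})$ has a unique sink $s$, by induction on $n$.
\item The layer $L=C_1\times\cdots\times C_{r-1}\times\{n\}$ is governed by the contraction $\chi_n(B):=\chi(B\cup\{n\})$ on $r$-sets. This $\chi_n$ is an $r$-signotope, and with blocks $C_1,\dots,C_{r-1}$ it is a block signotope of rank $r$ inducing $\mathcal{O}_\chi|_L$. By induction on $r$, $L$ has a unique sink $t=(t',n)$.
\end{itemize}
A global sink must be a sink of $R$ or of $L$, so it lies in $\{s,t\}$. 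The claim is therefore that \emph{exactly one} of $s,t$ is a global sink.

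\textbf{The two-layer claim (main obstacle).} The edges between $R$ and $L$ in dimension $r$ are $(x',c)$–$(x',n)$ for $c\in C_r\setminus\{n\}$. Their orientation is given by $\chi(x'\cup\{c,n\})=\chi_n(x'\cup\{c\})$, so everything is expressed through the single $r$-signotope $\chi_n$ together with $\chi$.
\begin{itemize}
\item $t$ is a global sink iff the whole $r$-line through $t$ points into $t$.
\item $s$ is a global sink iff the edge $s\to(s',n)$ is oriented towards $s$.
\end{itemize}
I would first reduce to $|C_r\setminus\{n\}|$ minimal by repeatedly applying the induction to $C_r\setminus\{c\}$, aiming for the case $|C_r|=2$. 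In that case I would argue as follows, considering the path from $s'$ to $t'$ in the first $r-1$ coordinates.
\begin{enumerate}
\item To exclude two sinks: if both $s$ and $t$ were sinks, pick a coordinate where $s'$ and $t'$ differ. Apply monotonicity to the $(r+2)$-set built from $s'\cup t'\cup\{c,n\}$, restricted to a suitable $(r+1)$-subset. This should force a sign change pattern that is forbidden.
\item To guarantee at least one sink: use that in each single layer, lines are transitive (base case). Then derive the required sign from the same monotonicity relations.
\end{enumerate}

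This two-layer consistency step is where I expect the real work. If the direct sign chase becomes unwieldy, my fallback is a counting argument. I would show that the refined index of $\mathcal{O}_\chi$ is injective, using signotope monotonicity on the sets where two vertices' refined indices are compared. Bijectivity onto $[|C_1|-1]_0\times\cdots$ would then follow, and with it the unique sink, namely the preimage of $(0,\dots,0)$.
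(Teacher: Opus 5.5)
There is a genuine gap, and it is in the step you flag yourself. Your preliminary steps are sound: restrictions, and contractions by the largest element $n$, are again block signotopes; lines carry transitive tournaments; and a global sink must be one of $s,t$. But the two-layer claim, which carries the whole theorem, is only announced, and the sign chase you sketch cannot work as described.

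\begin{itemize}
\item For $C_r=\{c,n\}$, ``exactly one of $s=(s',c)$ and $t=(t',n)$ is a global sink'' is equivalent to $\chi(s'\cup\{c,n\})=\chi(t'\cup\{c,n\})$. This relates two vertices $s',t'$ of the $(r-1)$-dimensional layer grid that may differ in any number $d\le r-1$ of coordinates.
\item One monotonicity relation only involves an $(r+2)$-set, while $s'\cup t'\cup\{c,n\}$ has $r+1+d$ elements. So your argument covers only $d=1$. There the four relevant signs of the $2\times 2$ subgrid $\{s_i,t_i\}\times\{c,n\}$ lie in one $(r+2)$-set, and monotonicity excludes both two sinks (compare Lemma~\ref{lemma:unique_sink_dim2}) and a directed $4$-cycle, which is exactly what you need. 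For $d\ge 2$ you give no way of chaining these relations.
\item Deleting elements of $C_r$ reduces \emph{uniqueness} to $|C_r|=2$, since two sinks survive deletion of an element used by neither. It does nothing for \emph{existence}, which is not inherited from proper subgrids in general: some orientations of the $3$-cube have a unique sink on every proper face but no global sink.
\item Your fallback, injectivity of the refined index, just restates the problem. It is Corollary~\ref{cor:refined_index_is_bijection}, which the paper deduces \emph{from} the theorem, and comparing refined indices of far-apart vertices meets the same obstruction.
\end{itemize}

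The missing ingredient is acyclicity (Lemma~\ref{lemma:signotopes_are_acyclic}). By Felsner--Weil, the graph $G_\chi$ on $r$-subsets of $[n]$ is acyclic, and $\mathcal{O}_\chi$ is an induced subgraph of it. Hence every subgrid has a sink, which settles existence outright.

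For uniqueness, the paper (Lemma~\ref{lemma:unique_sink}) reduces to two sinks $c,c'$ at antipodal corners of the $k$-cube they span, and inducts on $k$:
\begin{itemize}
\item Every vertex in an intermediate layer $V_i$ (the vertices differing from $c$ in exactly $i$ coordinates) has an out-edge into $V_{i-1}$. Otherwise it and $c$ would be two sinks of a smaller cube.
\item Symmetrically, using $c'$, it has an out-edge into $V_{i+1}$.
\item For $k\ge 3$, following these edges gives a directed cycle inside $V_i\cup V_{i+1}$, contradicting acyclicity.
\end{itemize}
The base case $k=2$ is exactly your $d=1$ sign chase. Adding acyclicity and this layer argument replaces your two-layer claim entirely.
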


In an arrangement of pseudolines~$\mathcal{A}$, the~\textit{arrangement graph}~$G_\mathcal{A}$ that represents crossings as vertices and arcs as directed edges (see the example in Figure~\ref{fig:acyclicity}) is always acyclic; \mbox{see~\cite[Lemma 1]{felsnerWeil01}}. 

\begin{figure}[tbh]
    \centering
    \includegraphics{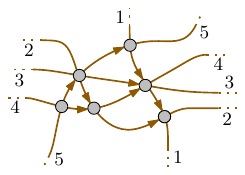}
    \caption{The arrangement graph~$G_\mathcal{A}$ is acyclic.}
    \label{fig:acyclicity}
\end{figure}

The first ingredient for the proof of Theorem~\ref{thm:block_signotopes_uso} is a generalization of the acyclicity of (a supergraph of)~$G_\mathcal{A}$ to signotopes of any rank~\cite[Lemma 10]{felsnerWeil01}. It states that the following graph~$G_\chi$ on~$r$-subsets of~$[n]$ is acyclic: Two vertices~$R, R'\in\binom{[n]}{r}$ share an edge if~$\lvert R\cap R'\rvert = r - 1$, and its orientation is from the lexicographic smaller to the lexicographic larger set, if~$\chi(R\cup R')=+$, and reversed if~$\chi(R\cup R')=-$. The orientation~$\mathcal{O}_\chi$ is identical to the subgraph of~$G_\chi$ induced by the vertices~$\left\{R\subseteq [n]\;:\;\lvert R\cap C_i\rvert = 1\text{ for all }i\right\}$ and thus inherits acyclicity.

\begin{lemma}\label{lemma:signotopes_are_acyclic}
    The orientation $\mathcal{O}_\chi$ is acyclic.
\end{lemma}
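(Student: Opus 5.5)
The plan is to reduce the statement to the acyclicity of the graph~$G_\chi$ from~\cite[Lemma 10]{felsnerWeil01}, as indicated just before the lemma. I will identify each vertex~$v=(c_1,\cdots,c_r)$ of the grid on~$C_1\times\cdots\times C_r$ with the $r$-subset~$R_v=\{c_1,\cdots,c_r\}\subseteq[n]$. Since the blocks are disjoint, this map is injective. Its image is exactly~$\{R\;:\;\lvert R\cap C_i\rvert=1\text{ for all }i\}$. Because the block partition is ordered, $c_1<\cdots<c_r$, so listing~$R_v$ in increasing order recovers~$v$ coordinatewise.

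The next step is to check that the two graphs agree on these vertices. Two grid vertices~$v,v'$ are adjacent iff they differ in exactly one coordinate~$i$, say~$c<c'$ in~$C_i$. This happens iff~$\lvert R_v\cap R_{v'}\rvert=r-1$, because two such transversal sets sharing $r-1$ elements must differ in the same block. Their union is the $(r+1)$-set~$\{c_1,\cdots,c_{i-1},c,c',c_{i+1},\cdots,c_r\}$, whose increasing order is exactly the order in which~$\chi$ is evaluated in the definition of~$\mathcal{O}_\chi$. For orientation, note that~$R_v$ and~$R_{v'}$ differ only in the element of~$C_i$, and~$c<c'$. Hence~$R_v$ is lexicographically smaller than~$R_{v'}$: the increasing sequences agree before position~$i$ and differ first at position~$i$. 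So the rule in~$G_\chi$ (orient from lex-smaller to lex-larger iff~$\chi(R\cup R')=+$) coincides with the rule~$v\to v'$ iff~$\chi(\cdots c,c'\cdots)=+$ in~$\mathcal{O}_\chi$. Therefore~$\mathcal{O}_\chi$ is isomorphic, as a directed graph, to the induced subgraph of~$G_\chi$ on the transversal sets.

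Finally, an induced subgraph of an acyclic digraph is acyclic. A directed cycle in~$\mathcal{O}_\chi$ would map to a directed cycle in~$G_\chi$, contradicting~\cite[Lemma 10]{felsnerWeil01}. The only point needing care is the lexicographic comparison when the two sets differ in one position. There, "lexicographic" must be read on increasing sequences, and the block structure guarantees that position~$i$ of the sorted set is the~$C_i$-element. Beyond this, I expect no real obstacle, since the heavy lifting is the cited acyclicity of~$G_\chi$.
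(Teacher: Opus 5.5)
Your proposal is correct and follows the paper's argument. The paper identifies $\mathcal{O}_\chi$ with the subgraph of $G_\chi$ induced by the transversal $r$-subsets and inherits acyclicity from \cite[Lemma 10]{felsnerWeil01}; you additionally spell out the adjacency correspondence and the lexicographic-order check, which the paper leaves implicit.
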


This implies that the orientation induced by~$\chi$ on any subgrid contains at least one sink. Next, we will complement this with the fact that any subgrid contains at most one sink. We show this first for at most two-dimensional subgrids and later extend it to general subgrids.

\begin{lemma}\label{lemma:unique_sink_dim2}
    The orientation induced by~$\mathcal{O}_\chi$ on any subgrid $S$ of dimension at most two contains at most one sink.
\end{lemma}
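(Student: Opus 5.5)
The plan is to argue directly from the monotonicity property, treating subgrids of dimension $0$, $1$ and $2$ separately. Dimension~$0$ is trivial, since the subgrid is a single vertex. Throughout I use one observation. Let $A\subseteq[n]$ be an $(r+1)$-set and let $x_1<\cdots<x_k$ be elements of $A$. Then the signs $\chi(A\setminus\{x_1\}),\dots,\chi(A\setminus\{x_k\})$ form a subsequence of $\chi(A)$, because omitting elements in increasing order corresponds to increasing positions $\lfloor\cdot\rfloor$. Since $\chi(A)$ has at most one sign change, so does this subsequence.

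\textbf{Dimension one.} Let $S=\{c_1\}\times\cdots\times S_i\times\cdots\times\{c_r\}$. Any two vertices of $S$ differ only in coordinate~$i$, so they are adjacent in the grid. The induced graph on $S$ is therefore a complete graph. If $S$ had two sinks, they would be joined by an edge, and one of them would have an outgoing edge. So $S$ has at most one sink, and no signotope property is needed here.

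\textbf{Dimension two.} Let $S=S_i\times S_j$ with $i<j$, all other coordinates fixed to a set $T$ of size $r-2$. Suppose $u\neq v$ are both sinks of $S$. By the one-dimensional case they cannot agree in a coordinate, so $u=(a,b)$ and $v=(a',b')$ up to swapping the roles, where $a<a'$ lie in $S_i$ and $b<b'$ lie in $S_j$. Both $u$ and $v$ remain sinks in the $2\times2$ subgrid $\{a,a'\}\times\{b,b'\}$. Hence the other two vertices of this $4$-cycle must be sources.

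Now take $A=T\cup\{a,a',b,b'\}$, which has $r+2$ elements. Since $i<j$ in the block partition, we have $a<a'<b<b'$. Let $\sigma_1,\dots,\sigma_4$ be the values of $\chi$ on $A$ minus $a$, $a'$, $b$, $b'$ respectively. Each edge of the $4$-cycle is governed by exactly one $\sigma$:
\begin{itemize}
\item The edge $(a,b)$–$(a',b)$ is governed by $\sigma_4$ (omit $b'$). It points to the smaller vertex, so $\sigma_4=-$.
\item The edge $(a,b')$–$(a',b')$ is governed by $\sigma_3$. It points to the larger vertex, so $\sigma_3=+$.
\item The edge $(a,b)$–$(a,b')$ is governed by $\sigma_2$, so $\sigma_2=-$.
\item The edge $(a',b)$–$(a',b')$ is governed by $\sigma_1$, so $\sigma_1=+$.
\end{itemize}
This gives $(+,-,+,-)$, which has three sign changes. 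By the observation above this contradicts monotonicity; note that the observation was stated for $(r+1)$-sets $A$. I apply it here through the $(r+1)$-set $A\setminus\{\cdot\}$, or more precisely through the analogous property for the $(r+2)$-set $A$. This is the standard fact that the restriction $\{x,y\}\mapsto\chi(T\cup\{x,y\})$ is a $3$-signotope-like sequence. The other configuration, $u=(a,b')$ and $v=(a',b)$, reverses every sign and gives $(-,+,-,+)$, which is equally forbidden.

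The main obstacle is exactly this last step: the sequence lives on an $(r+2)$-set, not an $(r+1)$-set. I would handle it with the known fact (Felsner–Weil) that an $r$-signotope's sequences on $(r+2)$-sets, indexed by the omitted element, are themselves constrained. Concretely, I would apply monotonicity to each of the four $(r+1)$-sets $A\setminus\{x\}$, using the sub-triples of signs that involve $T$ together with three of $a,a',b,b'$. Then I would derive that the assumed signs force a forbidden pattern in at least one of those $(r+1)$-sets. I expect this to be the place where some care with the edge-sign bookkeeping is required.
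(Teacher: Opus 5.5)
Your argument is correct and is essentially the paper's proof. Both pass to the $2\times 2$ subgrid spanned by the two sinks and apply monotonicity to $a,a',b,b'$ with the remaining coordinates $T$ held fixed. The paper does this through the contraction $M\mapsto\chi(M\cup T)$ of Lemma~\ref{lemma:subgrids_inherit_be_induced} and a four-case analysis that uses only one of the two sinks. Your normalization $a<a'$, $b<b'$, using both sinks to get $(+,-,+,-)$, is a slightly tidier packaging of the same idea.

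The ``main obstacle'' you flag does not exist; it comes from an off-by-one in the rank. Here $\chi\colon\binom{[n]}{r+1}\to\{-,+\}$, so monotonicity is a condition on $(r+2)$-sets. Your observation, restated for $(r+2)$-sets, then shows directly that $(\sigma_1,\sigma_2,\sigma_3,\sigma_4)$ is a subsequence of $\chi(A)$, which finishes the proof. Drop the fallback through the four $(r+1)$-sets $A\setminus\{x\}$: those sets are arguments of $\chi$, and there is no monotonicity condition on them to apply.
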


Before giving the proof of Lemma~\ref{lemma:unique_sink_dim2}, we show the following auxiliary statement. It states that for grid orientations, the property of being induced by a signotope inherits to subgrids. For a reader who is guided by intuition, this might be clear, so they may skip the proof.

\begin{lemma}\label{lemma:subgrids_inherit_be_induced}
    Let~$\mathcal{O}_\chi$ be an $r$-dimensional orientation induced by a block signotope~$\chi$ of rank~$(r+1)$, and~$\mathcal{O}'\subseteq\mathcal{O}_\chi$ the orientation induced on an $r'$-dimensional subgrid. Then,~$\mathcal{O}'$ is induced by a block signotope of rank~$(r'+1)$.
\end{lemma}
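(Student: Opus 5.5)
Write the subgrid as $S = S_1\times\cdots\times S_r$ with $S_i\subseteq C_i$ nonempty. Let $I=\{i : |S_i|>1\}$, so $|I|=r'$, and for each $i\notin I$ let $S_i=\{s_i\}$. The plan is to restrict $\chi$ in two steps. First, restricting to the ground set $\bigcup_i S_i$ keeps the signotope property, and the blocks stay blocks. Second, we contract the fixed singletons $s_i$ ($i\notin I$) away, which lowers the rank by one for each of them. After relabelling the ground set $\bigcup_{i\in I} S_i$ order-preservingly by $[n']$, this gives a block signotope of rank $r'+1$ with block partition $(S_i)_{i\in I}$.

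Step 1 is deletion. For $X\subseteq[n]$, the restriction $\chi|_{\binom{X}{r+1}}$ is again a signotope. Indeed, for an $(r+2)$-subset $A\subseteq X$ the sequence $\chi(A)$ is literally the same sequence as before, so the monotonicity property is unchanged. After an order-preserving relabelling of $X$ by $[|X|]$, the sets $S_i$ form a block partition, and the induced orientation on $S$ is by definition exactly $\mathcal{O}_{\chi|}$. This already handles the case $r'=r$.

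Step 2 is contraction by one fixed element. Fix an element $s$ that forms a singleton block $S_k=\{s\}$, and define $\chi'(B):=\chi(B\cup\{s\})\cdot\varepsilon(B)$ for $r$-subsets $B$ of the remaining ground set. Here $\varepsilon(B)$ is a sign correction, needed because the position of $s$ inside $B\cup\{s\}$ may shift. The point of using a block partition is that $s$ lies between blocks, and every set we evaluate contains exactly one element of each other block except a single block that contributes two elements. So the position of $s$ in $B\cup\{s\}$ is always the same, namely $k$ or $k+1$ depending on whether the doubled block lies before or after $S_k$. It is simplest to take $\varepsilon\equiv +$ and to define $\chi'$ on all $r$-subsets by $\chi'(B)=\chi(B\cup\{s\})$. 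Then check monotonicity: for an $(r+1)$-subset $A$ not containing $s$, let $A^+=A\cup\{s\}$ with $s$ at position $p$. Then $\chi'(A)$ is the sequence $\chi(A^+)$ with its $p$-th entry deleted, because omitting the $j$-th element of $A$ corresponds to omitting the $j$-th or $(j+1)$-th element of $A^+$, and this correspondence preserves order. Deleting one entry from a sequence with at most one sign change leaves at most one sign change, so $\chi'$ is a signotope of rank $r$. For the orientation, take $v,v'\in S$ differing in coordinate $i\neq k$, with the two values $c<c'$. The orienting set $\{c_1,\dots,c,c',\dots,c_r\}$ contains $s$, and removing $s$ gives exactly the orienting set for the contracted grid, with the same sign. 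So $\mathcal{O}_{\chi'}$ equals the orientation on $S$ with coordinate $k$ dropped, which is isomorphic as a grid orientation. The block partition survives the deletion of the block $\{s\}$.

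Step 3 is iteration. Apply Step 1 once, then Step 2 for each $k\notin I$ in turn, and induct on the number $r-r'$ of singleton coordinates. The main obstacle is the index bookkeeping in Step 2, namely verifying that omitting elements of $A$ corresponds order-preservingly to omitting elements of $A^+$ other than $s$. I would write this out carefully for a general $A$, since monotonicity has to hold for all $(r+1)$-subsets, not just for the grid-shaped ones.
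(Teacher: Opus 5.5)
Your proof is correct and follows the paper's route: first restrict $\chi$ to the ground set $\bigcup_i S_i$, then contract away the singleton blocks. You check the contraction step yourself, one element at a time, by noting that $\chi'(A)$ is $\chi(A\cup\{s\})$ with one entry deleted, whereas the paper contracts by the whole set $\{c_{r'+1},\dots,c_r\}$ at once and cites Felsner--Weil for the signotope property; the sign correction $\varepsilon$ you briefly consider is never needed and can simply be dropped.
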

\begin{proof}
    Let~$[n] = C_1 \;\dot{\cup}\;\cdots\;\dot{\cup}\; C_r$ be the block partition of~$\chi$, so~$\mathcal{O}_\chi$ is an orientation of the grid~$C_1\times\cdots\times C_r$. Let~$C_1'\times\cdots\times C_r'$,~$C_i'\subseteq C_i$ be the subgrid that corresponds to~$\mathcal{O}'$.

    Let~$\chi'$ be the restriction of~$\chi$ on~$(r+1)$-subsets of~$C_1'\;\dot{\cup}\;\cdots\;\dot{\cup}\; C_r'$. Then, identifying the elements~$C_1'\;\dot{\cup}\;\cdots\;\dot{\cup}\; C_r'$ with~$\left[\;\lvert C_1'\rvert + \cdots + \lvert C_r'\rvert\;\right]$, the restriction~$\chi'$ is \mbox{an~$(r+1)$-signotope} that induces~$\mathcal{O}'$ as an orientation of the grid~$C_1'\times\cdots\times C_r'$. If~$r'=r$,~i.e.~$\mathcal{O}'$ has the same dimension as~$\mathcal{O}_\chi$, this is all that needs to be shown.
    
    If the dimension of~$\mathcal{O}'$ is smaller than the dimension of~$\mathcal{O}_\chi$, so~$r' < r$, then there are~$r-r'$ indices~$i$ with~$\lvert C_i'\rvert = 1$. For simplicity, assume that these are the last~$r-r'$ indices, so~$C_i' = \{ c_i \}$ for all~$r'+1\leq i\leq r$, for some~$c_i\in C_i$. We have to show the existence of an~$(r'+1)$-signotope that induces~$\mathcal{O}'$ as an orientation of the grid~$C_1'\times\cdots\times C_{r'}'$. Consider the map\begin{align*}
        \chi'_{\downarrow\{c_{r'+1},\cdots, c_r\}}: \binom{C_1'\cup\cdots\cup C_{r'}'}{r'+1}&\to\{-, +\}\\
        M &\mapsto \chi'(M\cup\{c_{r'+1},\cdots, c_r\})\; .
    \end{align*}
    This is called a~\textit{contraction} of~$\chi'$ and as such it is indeed an~$(r'+1)$-signotope; see~\cite{felsnerWeil01} for details. One verifies that~$\chi'_{\downarrow\{c_{r'+1},\cdots, c_r\}}$ induces~$\mathcal{O}'$ as an orientation of the grid~\mbox{$C_1'\times\cdots\times C_{r'}'$}.
\end{proof}

\begin{proof}[Proof of Lemma \ref{lemma:unique_sink_dim2}]
    Suppose towards a contradiction that in the orientation induced by~$\mathcal{O}_\chi$ on~$S$ there are two sinks,~$c\in S$ and~$c'\in S$,~$c\neq c'$. Clearly, if~$c=(c_i)$ and~$c'=(c_i')$ differ only in a single coordinate,~$c$ and~$c'$ are connected by an edge and cannot be both sinks. We may assume that~$c$ and~$c'$ differ in exactly the first two coordinates. We consider the two-dimensional subgrid~$Q_2\subseteq S$ of four vertices that is spanned by~$c$ and~$c'$. In the orientation induced on~$Q_2$,~$c$ and~$c'$ are again sinks. Omitting the coordinates in which~$c$ and~$c'$ are equal, we identify \[Q_2 = \left\{(c_1, c_2), (c_1, c_2'), (c_1', c_2), (c_1', c_2')\right\}\; .\] 
    
    By Lemma~\ref{lemma:subgrids_inherit_be_induced}, the orientation induced by~$\mathcal{O}_\mathcal{\chi}$ on~$Q_2$ is induced by a block signotope~$\chi'$ of rank~$3$ with block partition~$\{c_1, c_1'\}\;\dot{\cup}\;\{c_2, c_2'\}$. Using the following case distinction (visualized in Figure~\ref{fig:proof_sign_ind_uso}) we show that, assuming~$c$ is a sink,~$c'$ must have at least one outgoing edge, which is a contradiction. 
    
    Recall that, by definition, the fact that~$\{c_1, c_1'\}\;\dot{\cup}\;\{c_2, c_2'\}$ is a block partition implies~$\{c_1, c_1'\} < \{c_2, c_2'\}$ (pairwise). This leaves the following four cases.

    \begin{itemize}
        \item Case~$c_1 < c_1' < c_2 < c_2'$: 
        
        As~$c$ is a sink, we must have~$\chi(c_1, c_2, c_2') = -$ and~$\chi(c_1, c_1', c_2) = -$. The monotonicity property implies~$\chi(c_1, c_1', c_2') = -$.

        \item Case~$c_1 < c_1' < c_2' < c_2$: 
        
        As~$c$ is a sink, we must have~$\chi'(c_1, c_2', c_2) = +$ and~\mbox{$\chi(c_1, c_1', c_2) = -$}. The monotonicity property implies~$\chi'(c_1, c_1', c_2') = -$.

        \item Case~$c_1' < c_1 < c_2 < c_2'$: 
        
        As~$c$ is a sink, we must have~$\chi'(c_1, c_2, c_2') = -$ and~\mbox{$\chi'(c_1', c_1, c_2) = +$}. The monotonicity property implies~$\chi'(c_1', c_2, c_2') = -$ or~$\chi'(c_1', c_1, c_2') = +$.

        \item Case~$c_1' < c_1 < c_2' < c_2$: 
        
        As~$c$ is a sink, we must have~$\chi'(c_1, c_2', c_2) = +$ and~\mbox{$\chi'(c_1', c_1, c_2) = +$}. The monotonicity property implies~$\chi'(c_1', c_2', c_2) = +$.
    \end{itemize}

    \begin{figure}[ht]
        \centering
        \includegraphics{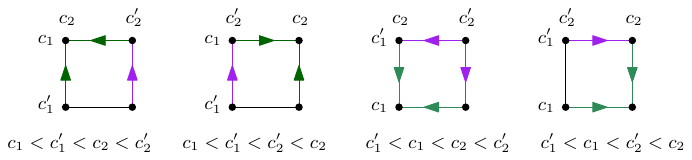}
        \caption{In each case, the two green arcs imply at least one of the purple arcs.}
        \label{fig:proof_sign_ind_uso}
    \end{figure}
\end{proof}

In order to generalize~Lemma~\ref{lemma:unique_sink_dim2} to higher dimensions, one can make use of the following Theorem due to Joswig, Kaibel and Körner (2002):

\begin{theorem}[Lemma 6 in \cite{joswigKaibelKoerner}]\label{thm:joswig}
    Let~$P$ be a simple~$d$-polytope and~$2\leq k\leq d-1$. If~$\overrightarrow{G_P}$ is an acyclic orientation of the skeleton graph~$G_P$ of~$P$ that has more than one global sink, then there is a~$k$-face of~$P$ on which the orientation induced by~$\overrightarrow{G_P}$ contains more than one sink.
\end{theorem}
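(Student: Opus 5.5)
The plan is a direct reachability argument on the whole polytope; no induction over face dimension is needed. Let $s\neq s'$ be two global sinks of $\overrightarrow{G_P}$. For each vertex $v$, let $\Sigma(v)$ be the set of global sinks reachable from $v$ by a directed path. Because the orientation is acyclic and $P$ is finite, every directed path can be extended until it stops, and it can only stop at a global sink. Hence $\Sigma(v)\neq\emptyset$ for every $v$.

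\emph{Step 1: some vertex reaches two global sinks.} Suppose every $\Sigma(v)$ is a singleton. Then the sets $\{v:\Sigma(v)=\{t\}\}$, one for each global sink $t$, partition the vertices. There are at least two such classes, since $s$ and $s'$ lie in different ones. The skeleton $G_P$ is connected, so some edge joins two different classes; orient it as $u\to w$. Then $u$ reaches everything $w$ reaches, so $\Sigma(u)\supseteq\Sigma(w)$, and $\Sigma(u)$ also contains the sink of $u$'s own class. So $\lvert\Sigma(u)\rvert\ge 2$, a contradiction.

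\emph{Step 2: choose $v$ extremal.} Among all vertices with $\lvert\Sigma(v)\rvert\ge 2$, choose $v$ maximal in a topological order of $\overrightarrow{G_P}$. Then every out-neighbour $w$ of $v$ has $\lvert\Sigma(w)\rvert=1$. Since $\Sigma(v)=\bigcup_{v\to w}\Sigma(w)$, there are two out-neighbours $w_1,w_2$ with $\Sigma(w_1)=\{t_1\}$, $\Sigma(w_2)=\{t_2\}$ and $t_1\neq t_2$.

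\emph{Step 3: use simplicity to get a bad $k$-face.} Since $P$ is simple, $v$ has exactly $d$ incident edges, and any $k$ of them span a unique $k$-face containing $v$. Because $2\le k\le d-1$, we can pick $k$ edges at $v$ that include $vw_1$ and $vw_2$; let $H$ be the $k$-face they span. Suppose, for contradiction, that $H$ has a unique sink $h$ in the induced orientation. The induced orientation on $H$ is acyclic, so every directed path inside $H$ extends to a sink of $H$, namely $h$. Thus $w_1$ and $w_2$ both reach $h$ inside $H$, and hence in $P$. Now $h$ reaches some global sink $t$. This gives $t\in\Sigma(w_1)\cap\Sigma(w_2)=\{t_1\}\cap\{t_2\}=\emptyset$, a contradiction. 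Therefore $H$ has more than one sink, which is exactly the claim.

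The main point requiring care is Step 3. It needs the standard fact that in a simple polytope every $k$-subset of the edges at a vertex determines a $k$-face containing exactly those edges at that vertex; this is where simplicity enters. The hypothesis $k\ge 2$ is what lets $H$ contain both $vw_1$ and $vw_2$. The upper bound $k\le d-1$ only makes the statement non-trivial, since for $k=d$ the face $H$ would be $P$ itself. Everything else is bookkeeping with acyclicity.
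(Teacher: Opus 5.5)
Your proof is correct, and it differs from the paper because the paper never proves Theorem~\ref{thm:joswig}. It quotes the result from Joswig, Kaibel and K\"orner. To stay self-contained, it then proves Lemma~\ref{lemma:unique_sink} directly instead of deducing it from the theorem.

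That substitute argument takes a different route from yours. It inducts on the dimension of the subgrid and reduces to two sinks $c,c'$ at antipodal corners of the cube $Q_k$ they span. Applying the induction hypothesis to subcubes spanned by $c$ (resp.\ $c'$) and a vertex $v$, it shows that every vertex of a middle layer $V_i$ has out-edges both to $V_{i-1}$ and to $V_{i+1}$. It then extracts a directed cycle in $V_i\cup V_{i+1}$, contradicting acyclicity. So it stays inside $Q_k$ and pays for this with an induction and the layer structure of the cube.

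You instead use acyclicity globally, through the reachability sets $\Sigma(v)$ and an extremal vertex $v$ whose out-neighbours $w_1,w_2$ reach disjoint sets of sinks. Any face through $vw_1$ and $vw_2$ must then have two sinks. You need no induction and no special structure, and the argument covers every simple polytope and every $2\le k\le d$ at once.

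Specialised to grids, your argument would prove Lemma~\ref{lemma:unique_sink} more directly than the paper does. The minimal subgrid containing $v,w_1,w_2$ has dimension at most two, so Lemmas~\ref{lemma:signotopes_are_acyclic} and~\ref{lemma:unique_sink_dim2} immediately give the contradiction. This also avoids the reduction to sinks differing in all coordinates.

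Two points you use only implicitly are worth stating:
\begin{itemize}
\item $v$ is not a global sink, since a sink reaches only itself; this is what justifies $\Sigma(v)=\bigcup_{v\to w}\Sigma(w)$.
\item $H$ has at least one sink by acyclicity; this is what turns ``not exactly one sink'' into ``more than one''.
\end{itemize}
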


\vspace{0.5em}

\begin{lemma}\label{lemma:unique_sink}
    The orientation induced by~$\mathcal{O}_\chi$ on any subgrid $S = C_1'\times\cdots\times C_r'$, $C_i'\subseteq C_i$ contains at most one sink.
\end{lemma}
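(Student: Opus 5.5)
We argue by induction on the dimension of the subgrid $S$, combining Lemma~\ref{lemma:signotopes_are_acyclic}, Lemma~\ref{lemma:unique_sink_dim2} and Theorem~\ref{thm:joswig}. The first step is to realize $S$ as the skeleton of a simple polytope. If $\operatorname{dim} S = r'$ with $n_i' := \lvert C_i'\rvert$, then $S$ is the graph of $P := \Delta_{n_1'-1}\times\cdots\times\Delta_{n_{r}'-1}$, where factors with $n_i' = 1$ are points and are dropped. This product of simplices is a simple polytope of dimension $d = \sum_i (n_i' - 1)$. Its faces are products of faces of the factors, so they correspond exactly to the subgrids $C_1''\times\cdots\times C_r''$ with $C_i''\subseteq C_i'$ nonempty. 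The graph of such a face is the induced subgrid, and its polytope dimension is $\sum_i(\lvert C_i''\rvert-1)$. By Lemma~\ref{lemma:signotopes_are_acyclic}, the orientation induced on $S$ is an acyclic orientation of $G_P$.

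Next we set up the induction on $d = \dim P$ rather than on grid dimension. The claim to prove is that every subgrid whose polytope has dimension $d$ has at most one sink. If $d \le 2$, the subgrid is a product of at most two nontrivial simplex factors, since each nontrivial factor contributes at least $1$ to $d$. Hence its grid dimension is at most two, and Lemma~\ref{lemma:unique_sink_dim2} applies. The same lemma covers every subgrid of grid dimension at most two, whatever its $d$.

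For the inductive step, let $d\ge 3$ and suppose $S$ has two global sinks. We apply Theorem~\ref{thm:joswig} with $k = 2$, which is allowed since $2\le d-1$. It yields a $2$-face $F$ of $P$ on which the induced orientation has more than one sink. That face is a subgrid $S'\subseteq S$ with $\sum_i(\lvert C_i''\rvert -1) = 2$, so $S'$ has grid dimension at most two. This contradicts Lemma~\ref{lemma:unique_sink_dim2}. With this choice of $k$ no induction is actually needed: the argument is a direct reduction. Together with acyclicity, this shows that every subgrid has at least and at most one sink.

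The main obstacle is making the identification in the first step precise. Theorem~\ref{thm:joswig} is stated for simple polytopes and their face-induced orientations. We must check three things: that every $2$-face of $\Delta_{n_1'-1}\times\cdots\times\Delta_{n_r'-1}$ is a subgrid of the form $C_1''\times\cdots\times C_r''$; that sinks of the induced orientation on the face coincide with sinks of that subgrid; and that the degenerate case where $P$ is a single point or segment is handled directly. These follow from the standard facts about faces of products of simplices recalled in Section~\ref{sec:introduction}, and the degenerate cases are trivial.
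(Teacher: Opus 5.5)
Your argument is correct, and it is the reduction the paper itself sketches just before its formal proof. You view $S$ as the graph of a simple product of simplices, get acyclicity from Lemma~\ref{lemma:signotopes_are_acyclic}, and apply Theorem~\ref{thm:joswig} with $k=2$ to obtain a $2$-face with two sinks, which contradicts Lemma~\ref{lemma:unique_sink_dim2}. You are slightly more careful than the paper's sketch. It speaks of a $(2,2)$-subgrid, but a $2$-face may also be a triangle, i.e.\ a one-dimensional subgrid of size $3$. Your phrase ``grid dimension at most two'' covers both cases. The induction framing in your write-up is unnecessary, as you note yourself.

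The proof the paper actually writes out avoids Theorem~\ref{thm:joswig} so as to be self-contained. It inducts on the grid dimension $k$ and reduces to two sinks $c,c'$ that differ in all $k$ coordinates. It then splits the $k$-cube they span into layers $V_i$ by the number of coordinates in which a vertex differs from $c$. The induction hypothesis is applied to the subcube spanned by $c$ and $v$ (resp.\ by $v$ and $c'$). This shows that every vertex in a middle layer has an out-edge both to the layer below and to the layer above. For $k\ge 3$, alternating between two consecutive middle layers then produces a directed cycle, contradicting acyclicity.

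Your route is shorter but delegates the real work to the Joswig--Kaibel--K\"orner theorem. The paper's route is elementary, uses nothing beyond subgrids and acyclicity, and in effect reproves the special case of that theorem needed here.
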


The argument for deducing Lemma~\ref{lemma:unique_sink} from Lemma~\ref{lemma:unique_sink_dim2} using Theorem~\ref{thm:joswig} is as follows. Suppose that the orientation induced on~$S$ contains two sinks. The grid orientation on~$S$ is an acyclic orientation of the skeleton of a polytope that is combinatorially equivalent to a product of simplices. This is a simple polytope. Hence, by~\mbox{Theorem}~\ref{thm:joswig}, there exists a two-dimensional face having two sinks. This corresponds to~a~\mbox{$(2, 2)$-subgrid} of~$S$ having two sinks, which contradicts Lemma~\ref{lemma:unique_sink_dim2}. 

However, in order to be self-contained, we give an independent proof of Lemma~\ref{lemma:unique_sink}.

\begin{proof}
    We show the statement by induction on the dimension~$k$ of the subgrid~$S$.

    For~$k\in\{1, 2\}$, the statement holds according to~Lemma~\ref{lemma:unique_sink_dim2}, so assume~$k > 2$.
    
    Suppose towards a contradiction, the orientation induced by~$\mathcal{O}_\chi$ on~$S$ contains two sinks~$c, c'\in S$, $c\neq c'$. 
    
    For two vertices~$v, v'\in S$ let $v\;\triangle\; v' := \{ i\; :\; v_i \neq v_i'\}$ denote the coordinates in which~$v$ and~$v'$ differ. We may assume~$\lvert c\;\triangle\; c'\rvert = k$, as otherwise, if~$\lvert c\;\triangle\; c'\rvert < k$, then~$c$ and~$c'$ are contained in a subgrid~$S'\subset S$ of lower dimension, and by induction we know that the orientation on~$S'$ induced by~$\mathcal{O}_\chi$ contains at most one sink. Me may further assume without loss of generality that~$c\;\triangle\; c' = \{1, \cdots, k\}$, i.e.~$c$ and~$c'$ differ in exactly the first~$k$ coordinates.

    Consider the~$k$-dimensional hypercube~$Q_k \subseteq S$ that is spanned by~$c$ and~$c'$, i.e.~$Q_k$ is the minimal subgrid~$S'\subseteq S$ with~$c, c'\in S'$. Omitting the coordinates in which~$c=(c_i)$ and~$c'=(c_i')$ equal, we identify~$Q_k=\{c_1, c_1'\}\times\cdots\times\{c_k, c_k'\}$. The vertices~$c$ and~$c'$ form opposite corners of~$Q_k$, and in the orientation induced by~$\mathcal{O}_\chi$ they are both sinks.

    For $0\leq i \leq k$, define the \textit{layer}~$V_i := \{ v \in Q_k\;:\;\lvert c\;\triangle\; v\rvert = i\}$. Observe that~$V_0 = \{c\}$ and~$V_k=\{c'\}$. We claim that for~$1 < i < k$, every~$v\in V_i$ has a directed edge~$(v, v')$ to some~$v'\in V_{i-1}$. Otherwise, all of the~$i$ neighbors of~$v$ in~$V_{i-1}$ are incoming neighbors. Consider the~$i$-dimensional hypercube~$Q'\subset Q_k$ that is spanned by~$c$ and~$v$. In the orientation induced on~$Q'$, both~$c$ and~$v$ are sinks. But by induction, in this subgrid of dimension less than~$k$ there is at most one sink; contradiction.

    Using a symmetric argument, using that~$c'$ is a sink, we get that for~$1 < i < k$, every~$v\in V_i$ has a directed edge~$(v, v')$ to some~$v'\in V_{i+1}$. 
    
    Then, for any pair of successive layers~$V_i, V_{i+1}$ with~$1 \leq i, i+1 \leq k-1$ we have that each~$v\in V_i$ has an outgoing edge to~$V_{i+1}$, and each~$v\in V_{i+1}$ has an outgoing edge to~$V_i$. Following these edges, one can find a directed circuit within~$V_i\cup V_{i+1}$. However, by Lemma~\ref{lemma:signotopes_are_acyclic}, the orientation~$\mathcal{O}_\chi$ is acyclic, so is the orientation induced on~$Q_k$; contradiction.
\end{proof}

\begin{proof}[Proof of Theorem \ref{thm:block_signotopes_uso}]
    Consider the orientation induced by~$\mathcal{O}_\chi$ on any subgrid. Lemma~\ref{lemma:signotopes_are_acyclic} shows acyclicity, which implies that there exists at least one sink, while Lemma~\ref{lemma:unique_sink} shows that there exists at most one sink. Hence, there is exactly one sink.
\end{proof}

Consider a block signotope~$\chi$ of rank~$(r+1)$ with block partition~$[n] = C_1\;\dot{\cup}\;\cdots\;\dot{\cup}\; C_r$. For an element $(c_1, \cdots, c_r)\in C_1 \times \cdots \times C_r$ we define the \textit{index in block $i$} as \begin{align*}
    \operatorname{rf}_i(c_1, \cdots, c_r) := \hspace{0.5cm} & \#\left\{ c\in C_i : c > c_i,\hspace{0.1cm}\chi(c_1, \cdots, c_i, c, c_{i+1}, \cdots, c_r) = + \right\} \\
    +\hspace{0.1cm} &\#\left\{ c\in C_i : c < c_i,\hspace{0.1cm}\chi(c_1, \cdots, c_{i-1}, c, c_i, \cdots, c_r) = - \right\}\hspace{0.1cm},
\end{align*}
and its \textit{refined index} as $$\operatorname{rf}_{\chi}(c_1, \cdots, c_r) := \left(\operatorname{rf}_1(c_1, \cdots, c_r), \cdots, \operatorname{rf}_r(c_1, \cdots, c_r)\right)\hspace{0.1cm}.$$ In terms of~$\mathcal{O}_\chi$, the index~$\operatorname{rf}_i(c_1, \cdots, c_r)$ is the out-degree of vertex~$(c_1, \cdots, c_r)$ in the~$i$-th dimension. Thus,~$\operatorname{rf}_\chi = \operatorname{rf}_{\mathcal{O}_\chi}$ coincide.

Clearly, $\operatorname{rf}_\chi(c_1, \cdots, c_r)\in \{0, \cdots, \lvert C_1\rvert - 1\}\times\cdots\times\{0, \cdots, \lvert C_r\rvert - 1\}$. The following result is an immediate consequence of Theorem~\ref{thm:block_signotopes_uso}.

\begin{corollary}\label{cor:refined_index_is_bijection}
    For any $(r+1)$-signotope~$\chi$ and block partition~$[n] = C_1\;\dot{\cup}\;\cdots\;\dot{\cup}\; C_r$, the refined index is a bijection $$\operatorname{rf}_\chi: C_1\times\cdots\times C_r\to\{0, \cdots, \lvert C_1\rvert - 1\}\times\cdots\times\{0, \cdots, \lvert C_r\rvert - 1\}\hspace{0.1cm}.$$  
\end{corollary}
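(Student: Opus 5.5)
The plan is to derive the statement directly from Theorem~\ref{thm:block_signotopes_uso} together with Lemma~\ref{lemma:usoIffRifBij}. First, fix the $(r+1)$-signotope~$\chi$ and the block partition $[n] = C_1\;\dot{\cup}\;\cdots\;\dot{\cup}\; C_r$, so that $(\chi, (C_1,\dots,C_r))$ is a block signotope of rank $(r+1)$. By Theorem~\ref{thm:block_signotopes_uso}, the induced grid orientation~$\mathcal{O}_\chi$ on $C_1\times\cdots\times C_r$ is a unique sink orientation. Next, recall the observation preceding the corollary: $\operatorname{rf}_i(c_1,\dots,c_r)$ counts exactly the out-neighbours of $(c_1,\dots,c_r)$ in dimension~$i$ in~$\mathcal{O}_\chi$. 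The first summand counts the $c>c_i$ with the edge oriented towards the larger element, i.e.\ $\chi=+$. The second counts the $c<c_i$ with $\chi=-$, i.e.\ the edge oriented from $c_i$ down to~$c$. Hence $\operatorname{rf}_\chi=\operatorname{rf}_{\mathcal{O}_\chi}$.

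The only bookkeeping step is to transport the statement of Lemma~\ref{lemma:usoIffRifBij}, which is phrased for grids $[n_1]\times\cdots\times[n_r]$, to the grid $C_1\times\cdots\times C_r$. Setting $n_i:=\lvert C_i\rvert$, I will identify $C_i$ with $[n_i]$ via the order-preserving bijection. This is a graph isomorphism of grids preserving dimensions, and transporting $\mathcal{O}_\chi$ along it yields a USO on $[n_1]\times\cdots\times[n_r]$ with the same out-degrees per dimension. Lemma~\ref{lemma:usoIffRifBij} then says the refined index is a bijection onto $[n_1-1]_0\times\cdots\times[n_r-1]_0=\{0,\dots,\lvert C_1\rvert-1\}\times\cdots\times\{0,\dots,\lvert C_r\rvert-1\}$. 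Composing with the identification gives the claim for $\operatorname{rf}_\chi$.

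A degenerate point to check is blocks with $\lvert C_i\rvert=1$ (or generally $r\ge1$ with small blocks). There $\operatorname{rf}_i\equiv 0$ and the target coordinate set is $\{0\}$, consistent with the lemma. There is no genuine obstacle: all the difficulty sits in Theorem~\ref{thm:block_signotopes_uso}. The step requiring the most care is only the verification that the two summands in the definition of $\operatorname{rf}_i$ match the orientation rule for $\mathcal{O}_\chi$, including the convention for the position of $c$ in the sorted tuple passed to~$\chi$.
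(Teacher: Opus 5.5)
Your proposal is correct and follows the paper's proof exactly: Theorem~\ref{thm:block_signotopes_uso} shows that $\mathcal{O}_\chi$ is a USO, Lemma~\ref{lemma:usoIffRifBij} shows that its refined index is a bijection, and $\operatorname{rf}_\chi = \operatorname{rf}_{\mathcal{O}_\chi}$. Your extra bookkeeping, checking the two summands against the orientation rule and relabelling each $C_i$ order-preservingly as $[\lvert C_i\rvert]$, only makes explicit what the paper leaves implicit.
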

\begin{proof}
    By Theorem~\ref{thm:block_signotopes_uso},~$\mathcal{O}_\chi$ is a unique sink orientation. By Lemma~\ref{lemma:usoIffRifBij}, its refined index is bijective. It coincides with~$\operatorname{rf}_\chi$.
\end{proof}

We consider Corollary~\ref{cor:refined_index_is_bijection} to be a result of independent interest: As it holds for any block partition~\mbox{$[n] = C_1 \;\dot{\cup}\;\cdots\;\dot{\cup}\; C_r$}, it provides general insight into the structure of signotopes.

\subsection{Admissibility of~\texorpdfstring{$3$-dimensional}{3-dimensional} USOs induced by signotopes}

As mentioned earlier, in the case of two-dimensional acyclic unique sink orientations, admissibility is equivalent to avoiding the double twist~$\text{DT}$. Moreover, as a further characterization, an acyclic unique sink orientation is admissible if and only if it is induced by an arrangement of pseudolines. Does this generalize to higher-dimensional grids?

Gärtner~\cite{gaertner02} has determined that, up to isomorphy, there exist only two acyclic unique sink orientations of a three-dimensional cube that are not admissible. We call them~$\text{NAC}_1$ and~$\text{NAC}_2$; they are both shown in Figure~\ref{fig:non_admissible_cubes}. Furthermore, he showed that these together with the double twist~$\text{DT}$ as forbidden patterns completely characterize admissibility of acyclic unique sink orientations up to dimension three:

\begin{figure}[tb]
    \centering
    \includegraphics[page=1]{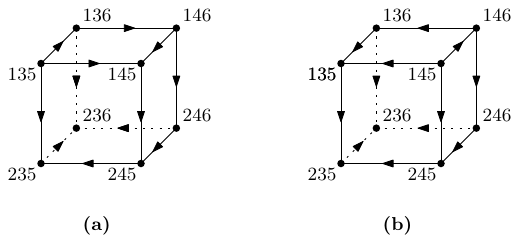}

    \caption{Two non-admissible cube orientations.\;\;(a):~$\text{NAC}_1$ (b):~$\text{NAC}_2$}
    \label{fig:non_admissible_cubes}
\end{figure}

\begin{theorem}[Theorem 4.1 in \cite{gaertner02}]\label{thm:characterization_non_admissible}
    An acyclic unique sink orientation of dimension of at most~$3$ is admissible if and only if it does not contain~$\text{NAC}_1$,~$\text{NAC}_2$ and~$\text{DT}$.
\end{theorem}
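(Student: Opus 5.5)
Necessity and sufficiency are handled separately, and we recall that admissibility is a hereditary property: by definition, it asks for the Holt--Klee path count in \emph{every} subgrid. For necessity, it therefore suffices to check that each of $\text{DT}$, $\text{NAC}_1$ and $\text{NAC}_2$ violates the Holt--Klee condition on the whole grid it spans. For $\text{DT}$ this was already observed: there are only two instead of three disjoint source--sink paths. For $\text{NAC}_1$ and $\text{NAC}_2$ I would exhibit, via Menger's theorem, a vertex cut of size $2$ separating source and sink in the $3$-cube, where $3$ disjoint paths are required. This is a direct inspection of Figure~\ref{fig:non_admissible_cubes}.

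For sufficiency, let $\mathcal{O}$ be an acyclic USO of dimension at most $3$ avoiding all three patterns. Every subgrid inherits all hypotheses, so it suffices to show the Holt--Klee count on the whole grid of size $(n_1,n_2,n_3)$. I would proceed by induction on $n_1+n_2+n_3$. In dimension $1$ the orientation is a transitive tournament (acyclic with a unique sink in every pair). Hence the source $s$ and sink $t$ are joined by the direct edge and by the paths $s\to v\to t$, which gives $n_1-1$ paths. In dimension $2$ the claim is the Felsner--Gärtner--Tschirschnitz characterization via $\text{DT}$, which I take as given. In dimension $3$ with all $n_i=2$ we have a $3$-cube. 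Here I would use Gärtner's finite classification of acyclic cube USOs up to isomorphism, and check that every class other than $\text{NAC}_1$ and $\text{NAC}_2$ has three disjoint paths; this is a finite verification.

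The inductive step handles a $3$-dimensional grid with some $n_i\ge 3$, say $n_1\ge 3$. I pick a value $a\in[n_1]$ that differs from the first coordinates of both $s$ and $t$, and set $S'=([n_1]\setminus\{a\})\times[n_2]\times[n_3]$. The source and sink of $S'$ are still $s$ and $t$, since these are the source and sink of every subgrid containing them. By induction, $S'$ carries $d-1$ internally disjoint $s$--$t$ paths, where $d=n_1+n_2+n_3-3$. The layer $L=\{a\}\times[n_2]\times[n_3]$ is a $2$-dimensional USO with its own source $s_L$ and sink $t_L$. We have $s\to (a,s_2,s_3)$ and $(a,t_2,t_3)\to t$, and inside $L$ there is a directed path from $(a,s_2,s_3)$ to $(a,t_2,t_3)$, because $L$'s restriction to the relevant subgrid has a unique source and sink. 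This yields one additional $s$--$t$ path through $L$ that is internally disjoint from $S'$.

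The main obstacle is the case $n_1=3$ with $s_1\ne t_1$: then no value $a$ avoids both coordinates, and the layer-by-layer argument breaks down. Here I would instead argue by Menger's theorem directly. Suppose a cut $X$ of size $<d$ separates $s$ from $t$. Restricting $X$ to the subgrids obtained by deleting one value of a coordinate, and applying the induction hypothesis to each, would force $X$ to meet every such subgrid in many vertices. Counting these incidences should yield a contradiction unless $X$ is concentrated in a $2\times2\times2$ subcube. In that case the failure localizes to a cube that must then be $\text{NAC}_1$ or $\text{NAC}_2$, contradicting the hypothesis. Making this localization precise is where I expect most of the work to lie; if it fails, the fallback is a case analysis on the positions of $s$ and $t$ in a grid of size $(3,n_2,n_3)$.
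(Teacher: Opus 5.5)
The paper gives no proof of this statement. It is quoted from Gärtner, with the $\text{DT}$ part going back to Felsner--Gärtner--Tschirschnitz. Your necessity argument and your base cases are fine but amount to those citations: dimension $1$, the two-dimensional $\text{DT}$ characterization, and the finite check on the $3$-cube. Your only new ingredient is the inductive step for three-dimensional grids with some $n_i\ge 3$. Its bookkeeping is sound: $s,t$ remain source and sink of $S'$, the count is $d-1$, and the paths are internally disjoint. The gap is the extra path. Inside $L=\{a\}\times[n_2]\times[n_3]$, the only neighbour of $s$ is $u=(a,s_2,s_3)$ and the only neighbour of $t$ is $w=(a,t_2,t_3)$. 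So the extra path must be $s\to u\rightsquigarrow w\to t$ with $u\rightsquigarrow w$ inside $L$.

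Your reason for this (``$L$'s restriction to the relevant subgrid has a unique source and sink'') is not valid. The square $\{a\}\times\{s_2,t_2\}\times\{s_3,t_3\}$ can be oriented uniformly from $(a,t_2,s_3)$ to $(a,s_2,t_3)$, which leaves $u$ and $w$ incomparable. This is compatible with $s,t$ being the global source and sink. On $\{1,2,3\}\times\{0,1\}^2$ take:
\begin{itemize}
\item layer $x_1=1$ uniform from $s=(1,0,0)$ to $t=(1,1,1)$;
\item layers $x_1=2,3$ uniform from $(a,1,0)$ to $(a,0,1)$;
\item dimension-$1$ edges following $1<2<3$ in the fibres with $x_3=0$, and $2<3<1$ in the fibres with $x_3=1$.
\end{itemize}
This is an acyclic USO in which neither choice of $a$ works. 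It contains a non-admissible cube, and that is the point: in general the path exists only because of the forbidden patterns, and your step never uses them.

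\emph{Case $s_1=t_1$.} This can be repaired from the induction hypothesis. The cube $\{s_1,a\}\times\{s_2,t_2\}\times\{s_3,t_3\}$ is a proper subgrid, hence admissible, with source $s$ and sink $t$. Since $s$ and $t$ share the neighbours $(s_1,t_2,s_3)$ and $(s_1,s_2,t_3)$, the third of its three disjoint $s$--$t$ paths must run from $u$ to $w$ inside the facet $x_1=a$. The degenerate cases $s_2=t_2$ or $s_3=t_3$ follow from a square.

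\emph{Case $s_1\ne t_1$.} Here no cube contains $s,t,u,w$. The smallest subgrid containing them is the $(3,2,2)$-grid $H=\{s_1,t_1,a\}\times\{s_2,t_2\}\times\{s_3,t_3\}$, which may be the whole grid. Even when $H$ is proper, its admissibility does not by itself keep the path through $u$ inside $L$. The path can still be forced by a case analysis over the seven USOs of the square $L\cap H$ that have no directed $u$--$w$ path. That analysis uses $\text{DT}$-freeness of the four $3\times 2$ slices of $H$ and admissibility of its three cubes. This is genuine work that your proposal does not contain.

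The obstacle you single out does not exist. For $n_1=3$ and $s_1\ne t_1$, the set $[3]\setminus\{s_1,t_1\}$ has one element, so $a$ is always available once $n_1\ge 3$. Your Menger/localization sketch is therefore aimed at a non-issue. The actual difficulty is the case $s_1\ne t_1$ of the path claim, which you treated as settled.
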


Our following theorem shows that one direction of~Theorem~\ref{thm:felsnerGaertnerTschirschnitz} can be generalized from $3$-signotopes alias pseudoline arrangements to~$4$-signotopes: they also induce admissible grid orientations.

\begin{theorem}\label{thm:3d_usos_from_signo_admissible}
    Every unique sink orientation~$\mathcal{O}_\chi$ on a grid of dimension at most~$3$ that is induced by a block signotope~$\chi$ is admissible.
\end{theorem}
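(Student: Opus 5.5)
By Theorem~\ref{thm:block_signotopes_uso} and Lemma~\ref{lemma:signotopes_are_acyclic}, $\mathcal{O}_\chi$ is an acyclic unique sink orientation. Theorem~\ref{thm:characterization_non_admissible} therefore reduces the claim to a finite check: $\mathcal{O}_\chi$ contains none of $\text{DT}$, $\text{NAC}_1$, $\text{NAC}_2$. Containment means some subgrid induces an orientation isomorphic to one of these patterns. By Lemma~\ref{lemma:subgrids_inherit_be_induced}, that subgrid orientation is itself induced by a block signotope of rank $3$ (for the $(2,3)$- or $(3,2)$-grid $\text{DT}$) or rank $4$ (for the $(2,2,2)$-cubes $\text{NAC}_1$, $\text{NAC}_2$). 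So it suffices to show that none of the three patterns is induced by a block signotope on a grid of exactly its size.

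For $\text{DT}$ I would use the two-dimensional theory. A rank-$3$ block signotope is the triple-orientation signotope of a block colored pseudoline arrangement. The orientation it induces is, up to reversal and transposition (see the remark after the definition of $\mathcal{O}_\chi$), the orientation induced by that arrangement, and by Theorem~\ref{thm:felsnerGaertnerTschirschnitz} that orientation is admissible. The class of orientations avoiding $\text{DT}$ is closed under reversal and transposition, since $\text{DT}$ reversed is again a non-admissible two-dimensional USO and hence isomorphic to $\text{DT}$. Hence $\text{DT}$ is never induced. Alternatively, one can do a direct sign check in the spirit of the proof of Lemma~\ref{lemma:unique_sink_dim2}, enumerating the orders of the $5$ elements in the block partition $\{a,a'\}\dot\cup\{b,b',b''\}$.

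For the cubes, the block partition is $\{c_1,c_1'\}\dot\cup\{c_2,c_2'\}\dot\cup\{c_3,c_3'\}$, a $4$-signotope on $6$ elements. Each of the $12$ cube edges is governed by exactly one $4$-subset, namely the subset consisting of both elements of one block and one element from each of the other two blocks. No other $4$-subsets occur: the only remaining type would contain two elements from each of two blocks with the third block missing, and a $4$-subset can meet all three blocks only as $2+1+1$. The plan is as follows.

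\begin{enumerate}
\item Fix $\text{NAC}_1$ (and then $\text{NAC}_2$). Quantify over the $48$ isomorphisms to the cube: coordinate permutations together with swapping $c_i\leftrightarrow c_i'$ in each block. Swapping within a block reverses which element is smaller, so it is cleaner to fix $c_i<c_i'$ and let the pattern be relabelled.
\item For each labelling, read off the $12$ forced signs.
\item Check the monotonicity property on the $5$-subsets of the $6$ elements. A contradiction will always sit inside a single $5$-subset $\{c_i,c_i',c_j,c_j',c_k\}$, whose five $4$-subsets are four edge-sets plus one $2+2$ set. That $2+2$ set is unconstrained, so the check amounts to whether the four edge signs, placed in their positions in the sequence $\chi(A)$, can be completed to a sequence with at most one sign change.
\end{enumerate}

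The main obstacle is organising this case analysis so that it is not an unreadable brute force. I would first use the symmetries that preserve signotopes: reversing the element order and negating $\chi$ both correspond to reversing all edges or to reflections, which reduces the number of labellings. I would also exploit the structural feature that makes $\text{NAC}_1$ and $\text{NAC}_2$ non-admissible. Both contain a two-dimensional face pair forming a twisted configuration analogous to a "double crossing", and I expect this to force a sign pattern $(+,-,+)$ or $(-,+,-)$ within one $5$-subset, forbidden by monotonicity. If a uniform argument does not emerge, I would fall back on an explicit table of the finitely many cases, in the same style as the four-case proof of Lemma~\ref{lemma:unique_sink_dim2}, possibly verified by computer.
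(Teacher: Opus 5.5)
Your proposal follows the paper's proof essentially step for step. Acyclicity, Gärtner's characterization (Theorem~\ref{thm:characterization_non_admissible}) and Lemma~\ref{lemma:subgrids_inherit_be_induced} reduce the claim to showing that $\text{DT}$, $\text{NAC}_1$ and $\text{NAC}_2$ are not induced by block signotopes, and $\text{DT}$ is excluded via the pseudoline-arrangement theory of~\cite{felsnerGaertnerTschirschnitz05}. For the cubes the paper carries out exactly your fallback, a computer check over the $48$ relabelings in $S_2\wr S_3$; in every case it finds a single $5$-subset whose sign sequence has two sign changes whatever value the unconstrained $2+2$ entry (marked $*$) takes, which confirms your expectation that the contradiction always sits inside one $5$-subset.
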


\begin{proof}
    By Lemma~\ref{lemma:signotopes_are_acyclic} we know that~$\mathcal{O}_\chi$ is acyclic. Using Theorem~\ref{thm:characterization_non_admissible}, it is sufficient to show that~$\mathcal{O}_\chi$ does not contain any of~$\text{NAC}_1$, $\text{NAC}_2$ and~$\text{DT}$. 
    
    Suppose,~$\mathcal{O}_\chi$ contains~$\text{NAC}_1$, $\text{NAC}_2$ or~$\text{DT}$. Recall that this means that for some subgrid~$S\subseteq\mathcal{O}_\chi$, the orientation induced on~$S$ is isomorphic to~$\text{NAC}_1$, $\text{NAC}_2$ or~$\text{DT}$. Hence, according to Lemma~\ref{lemma:subgrids_inherit_be_induced}, there is an orientation isomorphic to~$\text{NAC}_1$, $\text{NAC}_2$ or~$\text{DT}$ which is induced by a signotope. In the following, we will show that this is impossible,  which proves the theorem.

    For the double twist~$\text{DT}$, it is easy to see that it is not induced by a~$3$-signotope: the corresponding arrangement would contain a pair of pseudolines that cross twice; see~\cite{felsnerGaertnerTschirschnitz05} for details.

    Let~$S_2\wr S_3$ denote the subgroup of~$S_6$ that is generated by~$(12)$,~$(34)$,~$(56)$,~$(13)(24)$ and~$(24)(35)$. The group~$S_2\wr S_3$ is isomorphic to the \textit{hyperoctahedral group} which consists of the~$48$ symmetries of a three-dimensional cube. For each of~$\text{NAC}_1$ and~$\text{NAC}_2$, there are~$3!\cdot 2^3 = 48$ orientations isomorphic to it. We must show that none of them is induced by~a~$4$-signotope. Alternatively, we apply any of the~$48$ permutations~$S_2\wr S_3$ on the coordinates as in Figure~\ref{fig:non_admissible_cubes} and show that the obtained orientation is not induced by a signotope.
    
    For example, Figure~\ref{fig:non_admissible_cube_permuted} shows the orientation obtained from~$\text{NAC}_1$~(Figure~\ref{fig:non_admissible_cubes}(a)) by permuting the coordinates by the permutation~$\pi=(5,6,2,1,4,3)$. It cannot be induced by a signotope, because the orientation of the three red edges would imply two sign changes in~$\chi(12345)$\footnote{Recall the notation from signotopes: For an~$r$-signotope~$\chi$ and an~$(r+1)$-subset~$A\subseteq [n]$, 
    the sequence of signs~$\chi(A) := \left(\chi(A^{\lfloor 1\rfloor}), \cdots, \chi(A^{\lfloor r+1\rfloor})\right)$ contains at most one sign change.}.
    
    Using a computer program, for each~$\pi\in S_2\wr S_3$ we found a $5$-tuple for which in the corresponding sign sequence imposed by the orientation there are two sign changes; see our data in Appendix~\ref{appendix_sec:pl:3dusos_from_signo_admissible}.

    \begin{figure}[htb]
        \centering
        \includegraphics[page=3]{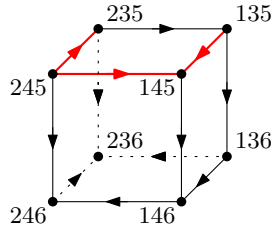}
        \caption{Non-admissible cube orientation~$\text{NAC}_1$ with labels permuted by~$\pi = (5, 6, 2, 1, 4, 3)$. The red edges impose~$\chi(2345)=-$,~$\chi(1345)=+$ and~$\chi(1245)=-$. This contradicts the monotonicity property of a signotope.}
        \label{fig:non_admissible_cube_permuted}
    \end{figure}
\end{proof}

However, the converse of Theorem~\ref{thm:3d_usos_from_signo_admissible} is not true: Figure~\ref{fig:admissible_cube_not_induced} shows an admissible orientation of a cube that is not induced by a signotope. This fact was verified again by computer assistance using the same method that we used in the proof of Theorem~\ref{thm:3d_usos_from_signo_admissible} in order to show that~$\text{NAC}_1$ and~$\text{NAC}_2$ are not induced by signotopes. The data can be found in Appendix~\ref{appendix_sec:pl:admissible_uso_not_induced}.

\begin{figure}[H]
    \centering
    \includegraphics[page=4]{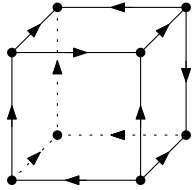}
    \caption{Admissible cube orientation that is not induced by a signotope.}
    \label{fig:admissible_cube_not_induced}
\end{figure}

\section{Conclusion and Future Work}
\label{sec:conclusion}

Our mission was to generalize the work from~\cite{felsnerGaertnerTschirschnitz05} to higher dimensions. For this purpose, we defined block signotopes and demonstrated in Theorem~\ref{thm:block_signotopes_uso} that they induce a USO of a grid of corresponding dimension. Moreover, for dimension three they are admissible (Theorem~\ref{thm:3d_usos_from_signo_admissible}). However, unlike in the two-dimensional case, not all higher-dimensional admissible USOs are induced by block signotopes. We conclude with the following question: Is the USO~$\mathcal{O}_\chi$ always admissible for dimensions greater than three?

\bibliography{bibliography}

\input{fullversion_appendix}

\end{document}

%% file: fullversion_appendix.tex
\appendix
\newpage

\section{Data for Proof of Theorem \ref{thm:3d_usos_from_signo_admissible}}
\label{appendix_sec:pl:3dusos_from_signo_admissible}

\begin{table}[h] 
    \centering
    \fontsize{6}{7} \selectfont
    \begin{minipage}{0.45\textwidth}
    \begin{tabular}{c|c}
        $\pi$ & violating sequence \\
        \midrule
        $(1, 2, 3, 4, 5, 6)$ & $\chi(23456) = *-+--$ \\
        $(1, 2, 3, 4, 6, 5)$ & $\chi(13456) = *+-++$ \\
        $(1, 2, 4, 3, 5, 6)$ & $\chi(12345) = +-++*$ \\
        $(1, 2, 4, 3, 6, 5)$ & $\chi(12345) = +-++*$ \\
        $(1, 2, 5, 6, 3, 4)$ & $\chi(13456) = *++-+$ \\
        $(1, 2, 6, 5, 3, 4)$ & $\chi(12356) = +-*++$ \\
        $(1, 2, 5, 6, 4, 3)$ & $\chi(23456) = *--+-$ \\
        $(1, 2, 6, 5, 4, 3)$ & $\chi(12356) = +-*++$ \\
        $(2, 1, 3, 4, 5, 6)$ & $\chi(13456) = *-+--$ \\
        $(2, 1, 3, 4, 6, 5)$ & $\chi(23456) = *+-++$ \\
        $(2, 1, 4, 3, 5, 6)$ & $\chi(12345) = -+--*$ \\
        $(2, 1, 4, 3, 6, 5)$ & $\chi(12345) = -+--*$ \\
        $(2, 1, 5, 6, 3, 4)$ & $\chi(23456) = *++-+$ \\
        $(2, 1, 6, 5, 3, 4)$ & $\chi(12356) = -+*--$ \\
        $(2, 1, 5, 6, 4, 3)$ & $\chi(13456) = *--+-$ \\
        $(2, 1, 6, 5, 4, 3)$ & $\chi(12356) = -+*--$ \\
        $(3, 4, 1, 2, 5, 6)$ & $\chi(12345) = ++-+*$ \\
        $(3, 4, 1, 2, 6, 5)$ & $\chi(12345) = ++-+*$ \\
        $(4, 3, 1, 2, 5, 6)$ & $\chi(12345) = --+-*$ \\
        $(4, 3, 1, 2, 6, 5)$ & $\chi(12345) = --+-*$ \\
        $(5, 6, 1, 2, 3, 4)$ & $\chi(12346) = -+--*$ \\
        $(6, 5, 1, 2, 3, 4)$ & $\chi(12345) = -+--*$ \\
        $(5, 6, 1, 2, 4, 3)$ & $\chi(12345) = +-++*$ \\
        $(6, 5, 1, 2, 4, 3)$ & $\chi(12346) = +-++*$
    \end{tabular}
    \end{minipage}
    \hspace{1cm}
    \begin{minipage}{0.45\textwidth}
    \begin{tabular}{c|c}
        $\pi$ & violating sequence \\
        \midrule
        $(3, 4, 2, 1, 5, 6)$ & $\chi(12456) = +-*++$ \\
        $(3, 4, 2, 1, 6, 5)$ & $\chi(12356) = -+*--$ \\
        $(4, 3, 2, 1, 5, 6)$ & $\chi(12356) = +-*++$ \\
        $(4, 3, 2, 1, 6, 5)$ & $\chi(12456) = -+*--$ \\
        $(5, 6, 2, 1, 3, 4)$ & $\chi(12346) = +-++*$ \\
        $(6, 5, 2, 1, 3, 4)$ & $\chi(12345) = +-++*$ \\
        $(5, 6, 2, 1, 4, 3)$ & $\chi(12345) = -+--*$ \\
        $(6, 5, 2, 1, 4, 3)$ & $\chi(12346) = -+--*$ \\
        $(3, 4, 5, 6, 1, 2)$ & $\chi(12356) = ++*-+$ \\
        $(3, 4, 6, 5, 1, 2)$ & $\chi(12356) = --*+-$ \\
        $(4, 3, 5, 6, 1, 2)$ & $\chi(12456) = ++*-+$ \\
        $(4, 3, 6, 5, 1, 2)$ & $\chi(12456) = --*+-$ \\
        $(5, 6, 3, 4, 1, 2)$ & $\chi(12345) = ++-+*$ \\
        $(6, 5, 3, 4, 1, 2)$ & $\chi(12346) = ++-+*$ \\
        $(5, 6, 4, 3, 1, 2)$ & $\chi(12345) = --+-*$ \\
        $(6, 5, 4, 3, 1, 2)$ & $\chi(12346) = --+-*$ \\
        $(3, 4, 5, 6, 2, 1)$ & $\chi(12456) = --*+-$ \\
        $(3, 4, 6, 5, 2, 1)$ & $\chi(12456) = ++*-+$ \\
        $(4, 3, 5, 6, 2, 1)$ & $\chi(12356) = --*+-$ \\
        $(4, 3, 6, 5, 2, 1)$ & $\chi(12356) = ++*-+$ \\
        $(5, 6, 3, 4, 2, 1)$ & $\chi(12346) = --+-*$ \\
        $(6, 5, 3, 4, 2, 1)$ & $\chi(12345) = --+-*$ \\
        $(5, 6, 4, 3, 2, 1)$ & $\chi(12346) = ++-+*$ \\
        $(6, 5, 4, 3, 2, 1)$ & $\chi(12345) = ++-+*$
    \end{tabular}
    \end{minipage}
    \vspace{10pt}
    \caption{For each permutation~$\pi\in S_2\wr S_3$ of~$\text{NAC}_1$, there is a $5$-tuple whose corresponding sign sequence contains two sign changes.}
    \medskip
    \label{app:tbl:pl:violated_sequences_first_cube}
\end{table}
\begin{table}[H]
    \centering
    \fontsize{6}{7} \selectfont
    \begin{minipage}{0.45\textwidth}
    \begin{tabular}{c|c}
        $\pi$ & violating sequence \\
        \midrule
        $(1, 2, 3, 4, 5, 6)$ & $\chi(12356) = +-*++$ \\
        $(1, 2, 3, 4, 6, 5)$ & $\chi(12456) = +-*++$ \\
        $(1, 2, 4, 3, 5, 6)$ & $\chi(12456) = +-*++$ \\
        $(1, 2, 4, 3, 6, 5)$ & $\chi(12356) = +-*++$ \\
        $(1, 2, 5, 6, 3, 4)$ & $\chi(12345) = +-++*$ \\
        $(1, 2, 6, 5, 3, 4)$ & $\chi(12346) = +-++*$ \\
        $(1, 2, 5, 6, 4, 3)$ & $\chi(12346) = +-++*$ \\
        $(1, 2, 6, 5, 4, 3)$ & $\chi(12345) = +-++*$ \\
        $(2, 1, 3, 4, 5, 6)$ & $\chi(12356) = -+*--$ \\
        $(2, 1, 3, 4, 6, 5)$ & $\chi(12456) = -+*--$ \\
        $(2, 1, 4, 3, 5, 6)$ & $\chi(12456) = -+*--$ \\
        $(2, 1, 4, 3, 6, 5)$ & $\chi(12356) = -+*--$ \\
        $(2, 1, 5, 6, 3, 4)$ & $\chi(12345) = -+--*$ \\
        $(2, 1, 6, 5, 3, 4)$ & $\chi(12346) = -+--*$ \\
        $(2, 1, 5, 6, 4, 3)$ & $\chi(12346) = -+--*$ \\
        $(2, 1, 6, 5, 4, 3)$ & $\chi(12345) = -+--*$ \\
        $(3, 4, 1, 2, 5, 6)$ & $\chi(12456) = -+*--$ \\
        $(3, 4, 1, 2, 6, 5)$ & $\chi(12356) = -+*--$ \\
        $(4, 3, 1, 2, 5, 6)$ & $\chi(12356) = -+*--$ \\
        $(4, 3, 1, 2, 6, 5)$ & $\chi(12456) = -+*--$ \\
        $(5, 6, 1, 2, 3, 4)$ & $\chi(12346) = -+--*$ \\
        $(6, 5, 1, 2, 3, 4)$ & $\chi(12345) = -+--*$ \\
        $(5, 6, 1, 2, 4, 3)$ & $\chi(12345) = -+--*$ \\
        $(6, 5, 1, 2, 4, 3)$ & $\chi(12346) = -+--*$
    \end{tabular}
    \end{minipage}
    \hspace{1cm}
    \begin{minipage}{0.45\textwidth}
    \begin{tabular}{c|c}
        $\pi$ & violating sequence \\
        \midrule
        $(3, 4, 2, 1, 5, 6)$ & $\chi(12456) = +-*++$ \\
        $(3, 4, 2, 1, 6, 5)$ & $\chi(12356) = +-*++$ \\
        $(4, 3, 2, 1, 5, 6)$ & $\chi(12356) = +-*++$ \\
        $(4, 3, 2, 1, 6, 5)$ & $\chi(12456) = +-*++$ \\
        $(5, 6, 2, 1, 3, 4)$ & $\chi(12346) = +-++*$ \\
        $(6, 5, 2, 1, 3, 4)$ & $\chi(12345) = +-++*$ \\
        $(5, 6, 2, 1, 4, 3)$ & $\chi(12345) = +-++*$ \\
        $(6, 5, 2, 1, 4, 3)$ & $\chi(12346) = +-++*$ \\
        $(3, 4, 5, 6, 1, 2)$ & $\chi(12346) = ++-+*$ \\
        $(3, 4, 6, 5, 1, 2)$ & $\chi(12345) = ++-+*$ \\
        $(4, 3, 5, 6, 1, 2)$ & $\chi(12346) = --+-*$ \\
        $(4, 3, 6, 5, 1, 2)$ & $\chi(12345) = --+-*$ \\
        $(5, 6, 3, 4, 1, 2)$ & $\chi(12345) = --+-*$ \\
        $(6, 5, 3, 4, 1, 2)$ & $\chi(12346) = --+-*$ \\
        $(5, 6, 4, 3, 1, 2)$ & $\chi(12345) = ++-+*$ \\
        $(6, 5, 4, 3, 1, 2)$ & $\chi(12346) = ++-+*$ \\
        $(3, 4, 5, 6, 2, 1)$ & $\chi(12345) = ++-+*$ \\
        $(3, 4, 6, 5, 2, 1)$ & $\chi(12346) = ++-+*$ \\
        $(4, 3, 5, 6, 2, 1)$ & $\chi(12345) = --+-*$ \\
        $(4, 3, 6, 5, 2, 1)$ & $\chi(12346) = --+-*$ \\
        $(5, 6, 3, 4, 2, 1)$ & $\chi(12346) = --+-*$ \\
        $(6, 5, 3, 4, 2, 1)$ & $\chi(12345) = --+-*$ \\
        $(5, 6, 4, 3, 2, 1)$ & $\chi(12346) = ++-+*$ \\
        $(6, 5, 4, 3, 2, 1)$ & $\chi(12345) = ++-+*$
    \end{tabular}
    \end{minipage}
    \vspace{10pt}
    \caption{For each permutation~$\pi\in S_2\wr S_3$ of~$\text{NAC}_2$, there is a $5$-tuple whose corresponding sign sequence contains two sign changes.}
    \medskip
    \label{app:tbl:pl:violated_sequences_second_cube}
\end{table}

We permute~$\text{NAC}_1$ (resp.~$\text{NAC}_2$) by each~$\pi\in S_2\wr S_3$ and obtain~$48$ isomorphic unique sink orientations. Table~\ref{app:tbl:pl:violated_sequences_first_cube} and~Table~\ref{app:tbl:pl:violated_sequences_second_cube} list for each~$\pi\in S_2\wr S_3$ a~$5$-tuple which witnesses that the obtained orientation is not induced by a~$4$-signotope. If it was, then, in the corresponding sign sequence of length~$5$, there had to be two sign changes, which contradicts the monotonicity property of a signotope. Some of the positions in these sequences are filled by a~"$*$". This indicates that for the corresponding~\mbox{$4$-tuple} the sign is not determined by the cube orientation. Still, independent of the sign at this position, the sequence contains at least two sign changes. 

\section{Data corresponding to Figure \ref{fig:admissible_cube_not_induced}}
\label{appendix_sec:pl:admissible_uso_not_induced}

\begin{table}[h]
    \centering
    \fontsize{6}{7} \selectfont
    \begin{minipage}{0.45\textwidth}
    \begin{tabular}{c|c}
        $(1, 2, 3, 4, 5, 6)$ & $\chi(12345) = -+--*$ \\
        $(1, 2, 3, 4, 6, 5)$ & $\chi(12345) = --+-*$ \\
        $(1, 2, 4, 3, 5, 6)$ & $\chi(12346) = ++-+*$ \\
        $(1, 2, 4, 3, 6, 5)$ & $\chi(12345) = ++-+*$ \\
        $(1, 2, 5, 6, 3, 4)$ & $\chi(12356) = -+*--$ \\
        $(1, 2, 6, 5, 3, 4)$ & $\chi(12456) = ++*-+$ \\
        $(1, 2, 5, 6, 4, 3)$ & $\chi(12356) = --*+-$ \\
        $(1, 2, 6, 5, 4, 3)$ & $\chi(12356) = ++*-+$ \\
        $(2, 1, 3, 4, 5, 6)$ & $\chi(12345) = +-++*$ \\
        $(2, 1, 3, 4, 6, 5)$ & $\chi(12346) = +-++*$ \\
        $(2, 1, 4, 3, 5, 6)$ & $\chi(12356) = ++*-+$ \\
        $(2, 1, 4, 3, 6, 5)$ & $\chi(12356) = --*+-$ \\
        $(2, 1, 5, 6, 3, 4)$ & $\chi(12346) = ++-+*$ \\
        $(2, 1, 6, 5, 3, 4)$ & $\chi(12345) = ++-+*$ \\
        $(2, 1, 5, 6, 4, 3)$ & $\chi(12346) = --+-*$ \\
        $(2, 1, 6, 5, 4, 3)$ & $\chi(12345) = --+-*$ \\
        $(3, 4, 1, 2, 5, 6)$ & $\chi(12356) = ++*-+$ \\
        $(3, 4, 1, 2, 6, 5)$ & $\chi(12356) = --*+-$ \\
        $(4, 3, 1, 2, 5, 6)$ & $\chi(12346) = -+--*$ \\
        $(4, 3, 1, 2, 6, 5)$ & $\chi(12345) = -+--*$ \\
        $(5, 6, 1, 2, 3, 4)$ & $\chi(12345) = ++-+*$ \\
        $(6, 5, 1, 2, 3, 4)$ & $\chi(12346) = ++-+*$ \\
        $(5, 6, 1, 2, 4, 3)$ & $\chi(12345) = --+-*$ \\
        $(6, 5, 1, 2, 4, 3)$ & $\chi(12346) = --+-*$
    \end{tabular}
    \end{minipage}
    \hspace{1cm}
    \begin{minipage}{0.45\textwidth}
    \begin{tabular}{c|c}
        $(3, 4, 2, 1, 5, 6)$ & $\chi(12345) = --+-*$ \\
        $(3, 4, 2, 1, 6, 5)$ & $\chi(12346) = --+-*$ \\
        $(4, 3, 2, 1, 5, 6)$ & $\chi(12345) = ++-+*$ \\
        $(4, 3, 2, 1, 6, 5)$ & $\chi(12345) = +-++*$ \\
        $(5, 6, 2, 1, 3, 4)$ & $\chi(12356) = --*+-$ \\
        $(6, 5, 2, 1, 3, 4)$ & $\chi(12356) = ++*-+$ \\
        $(5, 6, 2, 1, 4, 3)$ & $\chi(12456) = --*+-$ \\
        $(6, 5, 2, 1, 4, 3)$ & $\chi(12356) = +-*++$ \\
        $(3, 4, 5, 6, 1, 2)$ & $\chi(12346) = +-++*$ \\
        $(3, 4, 6, 5, 1, 2)$ & $\chi(12345) = +-++*$ \\
        $(4, 3, 5, 6, 1, 2)$ & $\chi(13456) = *+-++$ \\
        $(4, 3, 6, 5, 1, 2)$ & $\chi(12456) = +-*++$ \\
        $(5, 6, 3, 4, 1, 2)$ & $\chi(12456) = +-*++$ \\
        $(6, 5, 3, 4, 1, 2)$ & $\chi(23456) = *-+--$ \\
        $(5, 6, 4, 3, 1, 2)$ & $\chi(12345) = +-++*$ \\
        $(6, 5, 4, 3, 1, 2)$ & $\chi(12346) = +-++*$ \\
        $(3, 4, 5, 6, 2, 1)$ & $\chi(12346) = -+--*$ \\
        $(3, 4, 6, 5, 2, 1)$ & $\chi(12345) = -+--*$ \\
        $(4, 3, 5, 6, 2, 1)$ & $\chi(23456) = *+-++$ \\
        $(4, 3, 6, 5, 2, 1)$ & $\chi(12456) = -+*--$ \\
        $(5, 6, 3, 4, 2, 1)$ & $\chi(12456) = -+*--$ \\
        $(6, 5, 3, 4, 2, 1)$ & $\chi(13456) = *-+--$ \\
        $(5, 6, 4, 3, 2, 1)$ & $\chi(12345) = -+--*$ \\
        $(6, 5, 4, 3, 2, 1)$ & $\chi(12346) = -+--*$
    \end{tabular}
    \end{minipage}
    \vspace{10pt}
    \caption{For each permutation~$\pi\in S_2\wr S_3$ of the cube orientation in Figure~\ref{fig:admissible_cube_not_induced}, there is a $5$-tuple whose corresponding sign sequence contains two sign changes.}
    \medskip
    \label{app:tbl:pl:violated_sequences_admissible_cube}
\end{table}

%% file: fullversion.bbl
\def\authornoop#1{}
\begin{thebibliography}{1}

\bibitem{felsnerGaertnerTschirschnitz05}
Stefan Felsner, Bernd G{\"a}rtner, and Falk Tschirschnitz.
\newblock Grid orientations, {{\((d,d+2)\)}}-polytopes, and arrangements of
  pseudolines.
\newblock {\em Discrete Comput. Geom.}, 34(3):411--437, 2005.
\newblock \href {https://doi.org/10.1007/s00454-005-1187-x}
  {\path{doi:10.1007/s00454-005-1187-x}}.

\bibitem{felsnerWeil01}
Stefan Felsner and Helmut Weil.
\newblock Sweeps, arrangements and signotopes.
\newblock {\em Discrete Appl. Math.}, 109(1-2):67--94, 2001.
\newblock \href {https://doi.org/10.1016/S0166-218X(00)00232-8}
  {\path{doi:10.1016/S0166-218X(00)00232-8}}.

\bibitem{gaertner02}
Bernd G{\"a}rtner.
\newblock The random-facet simplex algorithm on combinatorial cubes.
\newblock {\em Random Struct. Algorithms}, 20(3):353--381, 2002.
\newblock \href {https://doi.org/10.1002/rsa.10034}
  {\path{doi:10.1002/rsa.10034}}.

\bibitem{gaertnerMorrisRuest08}
Bernd G{\"a}rtner, Walter~D. Morris~jun., and Leo R{\"u}st.
\newblock Unique sink orientations of grids.
\newblock {\em Algorithmica}, 51(2):200--235, 2008.
\newblock \href {https://doi.org/10.1007/s00453-007-9090-x}
  {\path{doi:10.1007/s00453-007-9090-x}}.

\bibitem{holtKlee99}
Fred Holt and Victor Klee.
\newblock A proof of the strict monotone 4-step conjecture.
\newblock In {\em Advances in Discrete and Computational Geometry}, volume 223
  of {\em Contemp. Math.}, pages 201--216. AMS, 1999.

\bibitem{joswigKaibelKoerner}
Michael Joswig, Volker Kaibel, and Friederike K{\"o}rner.
\newblock On the {{\(k\)}}-systems of a simple polytope.
\newblock {\em Isr. J. Math.}, 129:109--117, 2002.
\newblock \href {https://doi.org/10.1007/BF02773157}
  {\path{doi:10.1007/BF02773157}}.

\bibitem{kalai97}
Gil Kalai.
\newblock Linear programming, the simplex algorithm and simple polytopes.
\newblock {\em Math. Program.}, 79(1-3 (B)):217--233, 1997.
\newblock \href {https://doi.org/10.1007/BF02614318}
  {\path{doi:10.1007/BF02614318}}.

\bibitem{maninSchechtman89}
Yu.~I. Manin and V.~V. Shekhtman.
\newblock Arrangements of hyperplanes, higher braid groups and higher {Bruhat}
  orders.
\newblock In {\em Algebraic number theory -- in honor of {K}. {Iwasawa}},
  volume~17 of {\em Adv. Stud. Pure Math.} Academic Press, 1989.

\bibitem{szaboWelzl01}
Tibor Szabó and Emo Welzl.
\newblock Unique sink orientations of cubes.
\newblock In {\em Proc. 42nd IEEE Symposium on Foundations of Computer
  Science}, pages 547--555, 2001.
\newblock \href {https://doi.org/10.1109/SFCS.2001.959931}
  {\path{doi:10.1109/SFCS.2001.959931}}.

\end{thebibliography}
